\documentclass[journal]{IEEEtran}

%
\ifCLASSINFOpdf
 
\else
 
\fi

\usepackage{cite}
\usepackage{amsmath}
\usepackage{amssymb}
\usepackage{amsthm} 
\usepackage{caption}
\usepackage{subcaption}
\usepackage{epstopdf}
\usepackage{booktabs}
\usepackage[table,xcdraw]{xcolor}
\usepackage{graphicx}
\usepackage{float}
\usepackage{color}
\usepackage{latexsym}
\usepackage{algorithm}
\usepackage{multicol,lipsum}
\usepackage[nodisplayskipstretch]{setspace}
\usepackage{optidef}
\usepackage[noend]{algpseudocode}
\usepackage{amsmath}
\usepackage{tikz}
\usepackage{mathdots}
\usepackage{yhmath}
\usepackage{cancel}
\usepackage{color}
\usepackage{siunitx}
\usepackage{array}
\usepackage{multirow}
\usepackage{amssymb}
\usepackage{gensymb}
\usepackage{tabularx}
\usepackage{booktabs}
\usetikzlibrary{fadings}
\usetikzlibrary{patterns}
\usetikzlibrary{shadows.blur}
\usetikzlibrary{shapes}
\usepackage{lipsum}
\usepackage{mathtools}
\usepackage{cuted}
\usepackage{algorithm}
\usepackage{algpseudocode}
\usepackage{pifont}

\newtheorem{theorem}{Theorem}
\newtheorem{remark}{Remark}
\newtheorem{property}{Property}
\newtheorem{lemma}{Lemma}

\setlength{\textfloatsep}{-5pt }
\setlength{\abovecaptionskip}{-0pt} 
\setlength{\belowcaptionskip}{-0pt} 
\hyphenation{op-tical net-works semi-conduc-tor}

\begin{document}
	
%
\title{Percentile Optimization in Wireless Networks—\\Part I: Power Control for Max-Min-Rate to Sum-Rate Maximization (and Everything in Between)}
%
%
%

\author{Ahmad~Ali~Khan,~\IEEEmembership{Student Member,~IEEE,}
        Raviraj~S.~Adve,~\IEEEmembership{Fellow,~IEEE.}\vspace{-2.7em}
\thanks{Raviraj Adve is with the Edward S. Rogers Sr. Department
of Electrical and Computer Engineering, University of Toronto, Toronto,
Ontario, M5S 3G4, e-mail: rsadve@ece.utoronto.ca. Ahmad Ali Khan is with Ericsson R\&D, Ottawa, Ontario, K2K 2V6, email: ahmad.a.khan@ericsson.com. The authors would like to acknowledge the support of Ericsson Canada.}}

\markboth{IEEE TRANSACTIONS ON SIGNAL PROCESSING (DRAFT)}%
{Khan and Adve \MakeLowercase{\textit{et al.}}: Percentile Optimization in Wireless Networks-- Part 1.}

%



\maketitle

\begin{abstract}
  Improving throughput for cell-edge users through coordinated resource allocation has been a long-standing driver of research in wireless cellular networks. While a variety of wireless resource management problems focus on sum utility, max-min utility and proportional fair utility, these formulations do not explicitly cater to cell-edge users and can, in fact, be disadvantageous to them. In this two-part paper series, we introduce a new class of optimization problems called percentile programs, which allow us to explicitly formulate problems that target lower-percentile throughput optimization for cell-edge users. Part I focuses on the class of least-percentile throughput maximization through power control. This class subsumes the well-known max-min and max-sum-rate optimization problems as special cases. Apart from these two extremes, we show that least-percentile rate programs are non-convex, non-smooth and strongly NP-hard in general for multiuser interference networks, making optimization extremely challenging. We propose cyclic maximization algorithms that transform the original problems into equivalent block-concave forms, thereby enabling guaranteed convergence to stationary points. Comparisons with state-of-the-art optimization algorithms such as successive convex approximation and sequential quadratic programming reveal that our proposed algorithms achieve superior performance while computing solutions orders of magnitude faster. 
 
\end{abstract}

\begin{IEEEkeywords}
Percentile optimization, cell-edge, power control, cyclic maximization, least-percentile rate, greatest-percentile.
\end{IEEEkeywords}

%
\IEEEpeerreviewmaketitle

\section{Introduction}
%
%
%
%
\subsection{Background and Overview}
\IEEEPARstart{C}{oordinated} resource allocation through optimization is an area of considerable research interest due to its potential for improving service in wireless cellular networks ~\cite{bjornson_optimal_2013,gesbert_adaptation_2007, soret_interference_2018}. Due to the randomness and inherent physical characteristics of the wireless propagation medium, however, there is significant variation in the throughput, energy efficiency and power consumption achieved by different users across a cellular network \cite{ma_interference-alignment_2018}. {\color{black} In particular, users located close to the boundaries of cells (commonly referred to as \textit{cell-edge} users) consistently experience poor received signal strength from their serving base station (BS) as well as strong interference from neighboring cells \cite{zhang_weighted_2011,yang_cell-edge-aware_2017,ma_interference-alignment_2018}, as illustrated in Figure \ref{edge_users}. This leads to poor achieved data rates for cell-edge users manifested in the lower percentiles (e.g., $5^\mathrm{th}$- and $10^\mathrm{th}$-percentile rates). Improving these lower-percentile rates has been identified as a crucial requirement for current and next-generation cellular networks to enhance access as well as enabling new applications \cite{3gpp,5gppp,itu}. For example, the 3GPP industrial standards for future wireless cellular networks call for a 3$\times$ improvement in $5^\mathrm{th}$-percentile rates as compared to the current 5G levels \cite{3gpp}; similar targets have been proposed in joint industry-academic studies carried out by the 5G Public Private Partnership (5GPPP) \cite{5gppp}. Likewise, a recent work by a major cellular network infrastructure provider suggests a target of 5 Gbps for the $5^\mathrm{th}$-percentile rate in 6G networks \cite{ziegler_6g_2020}. As such, numerous research papers have been published by equipment vendors \cite{sihlbom_reconfigurable_2022,ziegler_6g_2020} exploring how increased \textit{physical resources}, such as number of transmission antennas and orthogonal frequency bands, can help achieve these targets for lower-percentile rates.}

{\color{black}Despite the clear necessity of improving lower-percentile service, prior research works have, at best, proposed indirect techniques for tackling this goal. Physical-layer techniques like fractional frequency reuse \cite{chang_optimal_2016}, heterogeneous architectures \cite{mankar_load-aware_2018} and densification \cite{lopez-perez_towards_2015} have been explored in prior works with the aim of improving lower-percentile cell-edge rates. While effective, these approaches are nonetheless heuristic in nature as they do not directly address the mathematical problem of optimizing throughput for a desired percentile. Additionally, these techniques exhibit various undesirable characteristics; fractional frequency reuse schemes, for example, are bandwidth-inefficient since the frequency allocations are static within each cell, leading to under-utilization of the available spectrum. Additionally, there is no optimal method to assign the frequency tones among different users. It is critical to recognize that none of these aforementioned methods are based on signal processing techniques.}
\begin{figure}[t!]
	\begin{center} 
		\includegraphics[trim={3cm 3cm 3cm 3cm},clip, width=0.40\textwidth]{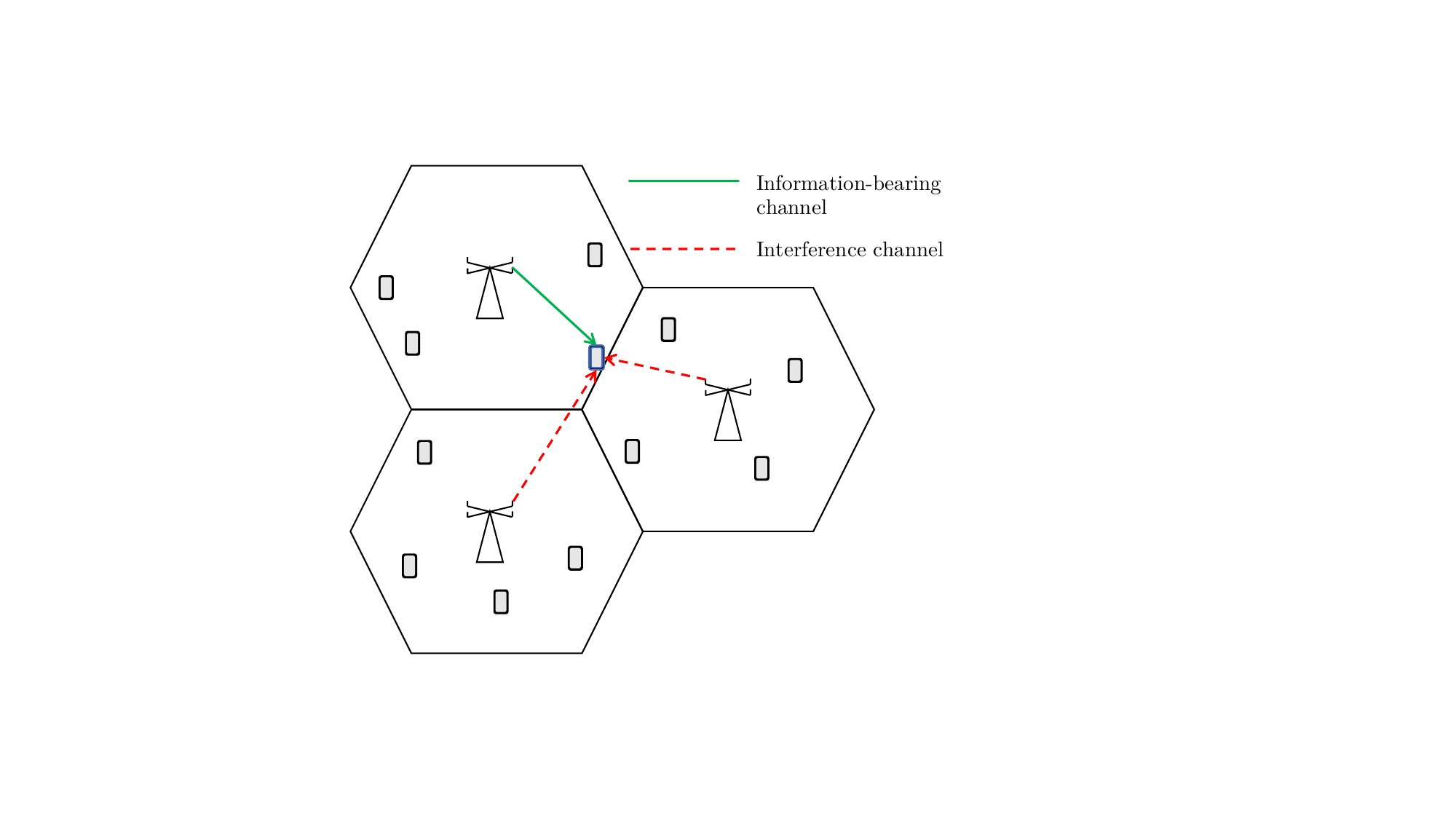}
		\caption{{\color{black}A typical cell-edge user in a wireless cellular network.}}
		\centering
		\label{edge_users}
	\end{center}
\end{figure}

{\color{black}In contrast to industrial targets, the $10^\mathrm{th}$-percentile throughput is widely used as a measure of cell-edge rate in most academic works including, but not limited to, \cite{shen_fractional_2018-1,khan_optimizing_2020,naderializadeh_resource_2021}. However, prior research works have come up with \textit{indirect} ways to optimize this metric via signal processing techniques like beamforming and power control. This is usually accomplished by solving a network utility maximization problem, with the most common choices of network utility function being sum-rate \cite{zhi-quan_luo_dynamic_2008,evangelista_fairness_2019}, minimum rate \cite{zhi-quan_luo_dynamic_2008,razaviyayn_linear_2013,naghsh_maxmin_2019} and proportional fair (PF) (usually approximated as weighted sum-rate (WSR)) \cite{shi_iteratively_2011,shen_fractional_2018,khan_optimizing_2018,zhang_weighted_2011}; however, none of these utilities explicitly targets cell-edge users.}

Sum-rate maximization is well-known to be unfair to edge users: as noted in \cite{rahman_interference_2009}, transmitting to edge users creates interference for cell-center users, driving the network throughput down. Hence, optimal solutions to the sum-rate maximization problem assign little to no power to these users \cite{shen_fractional_2018-1}. In contrast, the minimum rate caters to the \textit{weakest edge user} in the network resulting in a large rate penalty for most other users. This phenomenon makes it an unsuitable utility for large networks: as noted by \cite{ghazanfari_enhanced_2020}, the minimum achievable rate in a multiuser network asymptotically approaches zero as the number of users grows. 

Sum-log-average-rate optimization (and the corresponding WSR maximization problem) on the other hand, is proportionally fair (PF) and aims to serve all users. Nevertheless, we emphasize that WSR maximization \textit{does not optimize the edge user rates}; essentially, this is an ad-hoc choice of metric. Hence, any comparison of edge rates achieved by different WSR maximization algorithms, such as the weighted minimum mean-squared error (WMMSE) scheme proposed in \cite{shi_iteratively_2011}, is not mathematically rigorous. Schemes that achieve higher sum-log-utility do not necessarily achieve higher edge rates and vice versa; as a specific example, the approach presented in \cite{khan_optimizing_2018} improves sum-log-utility over the WMMSE algorithm, but results in lower $10^\mathrm{th}$-percentile rates.

Alternative utilities like $\alpha$-fair power allocation \cite{guo_alpha_2016} and Jain's fairness index \cite{sediq_optimal_2013} have also been proposed to prioritize serving cell-edge users, but as observed in \cite{lan_axiomatic_2010}, such fairness measures are not \textit{physically interpretable} as we cannot distinguish whether one measure is `better' than another. Enforcing constraints on individual throughputs is also a similar heuristic to enable fairer resource allocation \cite{eisen_learning_2019,li_multicell_2021}.

Finally, we note that cell-edge throughput optimization has been attempted indirectly under the framework of stochastic analysis \cite{hosseini_optimizing_2018,yang_cell-edge-aware_2017}. A resource allocation scheme is chosen and closed-form expressions for distributions of the desired metric are obtained using stochastic geometry. The resource allocation parameters that yield the highest cell-edge throughput for a desired percentile are then empirically chosen \cite{hosseini_optimizing_2018,yang_cell-edge-aware_2017}. It should be emphasized, however, that such schemes enable \textit{analysis} but not \textit{direct optimization} of cell-edge throughput as the power control and precoding strategies are essentially pre-determined.

In summary, despite the crucial importance of throughput optimization at desired percentiles for future generations of wireless networks, none of the prior resource allocation works (including the aforementioned FP \cite{shen_fractional_2018-1} and WMMSE methods \cite{shi_iteratively_2011}) have tackled this class of problems. Therefore, we believe that there is a clear need to study these problems as a result of their practical significance as well as the lack of a rigorous theoretical framework to solve them in the available literature. Motivated by the practical significance, theoretical challenges, and the lack of a coherent framework, this two-part paper series addresses this critical gap by formulating percentile programs in wireless networks and showcases iterative optimization methods to solve them effectively.

\subsection{Contributions of Part I}
In this paper, we introduce a new class of optimization problem with specific reference to resource allocation problems in communications, which we henceforth term as \textit{percentile programs}. Part I of this two-paper series focuses on sum-percentile rate optimization problems. Specifically, the contributions of this part can be summarized as follows:

\begin{itemize}
	\item{\textbf{Problem formulation:} We develop an explicit problem formulation for the maximization of sum-least-percentile rate via power control for a multicell, multiuser single-input, single-output (MU-SISO) network for any choice of percentile. The proposed formulation enables us to directly optimize the throughput for any choice of percentile; to the best of our knowledge, our work is the first of its kind to directly tackle this class of problems. 
		
	{\color{black}\item{\textbf{Optimizing cell-edge rates:}} This formulation directly targets cell-edge optimization by targeting lower percentiles, e.g. $5^\mathrm{th}$-percentile rate, in accordance with the industrial targets for next-generation networks \cite{3gpp,5gppp,itu,ziegler_6g_2020}. Additionally, we show that the choice of the percentile level allows us to control the tradeoff between favoring cell-center (i.e., sum-rate) users or cell-edge users.}}
	
	\item{\textbf{Complexity analysis:} Through a polynomial-time reduction from the maximum independent set problem, we demonstrate that the power-control sum-percentile rate optimization problem is strongly NP-hard, in general, for all percentiles strictly greater than the minimum percentile. In this regard, this work is the first of its kind to generalize the previous results that were restricted to the max-min-rate and sum-rate problems \cite{zhi-quan_luo_dynamic_2008}.}
	
	\item{\textbf{Cyclic optimization to stationarity:} We transform the original sum-least-percentile utility into equivalent block-concave forms, yielding convenient cyclic maximization algorithms. Furthermore, we demonstrate that the proposed algorithms belong to the broader class of minorization-maximization algorithms, and present proofs of non-decreasing convergence to stationary points of the original problem.}
		
\end{itemize}

\subsection{Notation}
Prior to proceeding further, we define some notation used in this paper. $\mathbb{R}$, ${\mathbb{R}}_{{+}}$ and ${\mathbb{R}}_{{++}}$ represent the set of real numbers, non-negative real numbers, and positive numbers respectively. {\color{black}{Similarly, we use $\mathbb{Z}$, ${\mathbb{Z}}_{{+}}$ and ${\mathbb{Z}}_{{++}}$ to denote the set of integers, non-negative integers, and positive integers respectively.}} We denote scalars using lowercase letters (e.g., $x$), vectors using lowercase boldface (e.g., $\mathbf{x}$), matrices using uppercase boldface (e.g., $\mathbf{X}$) and sets using script typeface (e.g., $\mathcal{X}$). The optimal value of a variable is denoted by a $*$ sign in superscript (e.g., $x^{*}$, $\mathbf{x}^{*}$, $\mathbf{X}^{*}$). We denote the all-ones and all-zeros vectors by $\mathbf{1}$ and $\mathbf{0}$, respectively, with dimension determined by the context. The operator $\left|\cdot\right|$ denotes the absolute value when applied to a scalar and cardinality when applied to a set. 

For a vector $\mathbf{x}\in\mathbb{R}^{n}$, we denote by $\mathbf{x}^{\downarrow}$ the vector with identical components, but sorted in descending order; likewise, we denote by $\mathbf{x}^{\uparrow}$ the vector with identical components but sorted in ascending order. Hence, $x_{i}^{\downarrow}$ ($x_{i}^{\uparrow}$) then denotes the $i^\mathrm{th}$ entry in $\mathbf{x}^{\downarrow}$ ($\mathbf{x}^{\uparrow}$), i.e., the $i^\mathrm{th}$ largest (smallest) entry in $\mathbf{x}$. The number of ways of picking $k$ unordered outcomes from $n$ possibilities (i.e., the binomial coefficient) is denoted by $_n{C}_k$.

\subsection{Organization}
Part I is organized as follows:
\begin{itemize}
	\item In Section \ref{properties}, we define sum-percentile functions and discuss key properties needed to construct the subsequent optimization problems.
	\item We formulate and solve the sum-least percentile throughput optimization problem for the special setting of parallel Gaussian channels in Section \ref{gaussian}. 
	\item Section \ref{interference_channel} deals with the more realistic setting of an interference-limited multicell, multiuser network. Finally, we draw conclusions in Section \ref{conclusions}.
\end{itemize}		
To maintain consistency, Sections \ref{gaussian} and \ref{interference_channel} follow an identical order: we first introduce the problem formulation, then describe the proposed approach(es) and finally present the numerical results at the end of section. Thus, the numerical results for the simplified parallel Gaussian channel setting are given in Section \ref{pg_results} while those for the more realistic interference-limited cellular network are given in Section \ref{intf_results}.

Part II deals with the more general setting of beamforming for a multiuser multiple-input multiple-output (MIMO) network, as well as the construction of novel utility functions by composition, and ergodic percentile throughput optimization.
\section{Sum-Percentile Functions and Their Properties}\label{properties}

\subsection{Definitions}
We begin by defining two classes of sum-percentile utility functions. Given an integer $K\in\mathbb{Z_{++}}$ and a percentile $q\in\left(0,100\right]$, we define the corresponding \textit{percentile number} $K_q$ as
\[
K_{q}=\mathrm{min}\left\{ k\in\mathbb{Z_{++}}\left|\frac{100k}{K}\geq q\right.\right\} 
\]

Given the vector $\mathbf{x}=[x_{1},\ldots,x_{K}]^{T}$, we define the sum-greatest-$q^\mathrm{th}$-percentile (SGqP) utility function $\ensuremath{F_{K_{q}}\left(\mathbf{x}\right)}:\mathbb{R}^{K}\mapsto\mathbb{R}$ as the sum of the $K_q$ largest components of $\mathbf{x}$, i.e.,
\begin{equation}\label{SGqP_definition}
F_{K_{q}}\left(\mathbf{x}\right)=\sum_{i=1}^{K_{q}}x_{i}^{\downarrow}
\end{equation}

Similarly, we define the sum-least-$q^\mathrm{th}$-percentile (SLqP) utility function $\ensuremath{f_{K_{q}}\left(\mathbf{x}\right)}:\mathbb{R}^{K}\mapsto\mathbb{R}$  as the sum of the smallest $K_q$ components of $\mathbf{x}$, i.e.,
\begin{equation}\label{SLqP_definition}
f_{K_{q}}\left(\mathbf{x}\right)=\sum_{i=1}^{K_{q}}x_{i}^{\uparrow}
\end{equation}

We shall often find it convenient to explicitly indicate the components of a vector when dealing with SLqP and SGqP functions; thus, throughout this paper, $f_{K_{q}}\left(\mathbf{x}\right)$ and $f_{K_{q}}\left(x_{1},\ldots,x_{n}\right)$ are equivalent.

\subsection{Properties}
We now state some key properties of the SLqP and SGqP utility functions. The latter class of functions is dealt with only in Part II and may be safely skipped for readers wishing to focus on the power control problems examined in Part I.

\begin{property}\label{convexity_concavity}
	The SGqP (resp. SLqP) utility function is convex (resp. concave) for all values of $q\in\left(0,100\right]$.
\end{property}
\begin{proof}
	Observe that we can express the SGqP utility function in an alternative form as follows. We define the following set:
	\begin{equation}\label{binary_vectors_set}
	\mathcal{A}_{K_{q}}\coloneqq\left\{ \left.\mathbf{a}\in\left\{ 0,1\right\} ^{K}\right|\mathbf{1}^{T}\mathbf{a}=K_{q}\right\}
	\end{equation}
	
	In other words, $\mathcal{A}_{K_{q}}$ is the set of binary vectors of length $K$ with Hamming weight of $K_q$; this set has cardinality $_{K}C_{K_q}$. Then the SGqP utility function can be equivalently expressed as
	\begin{equation}\label{SGqP_pointwise_maximum}
	F_{K_{q}}\left(\mathbf{x}\right)=\underset{\mathbf{a}\in\mathcal{A}_{K_{q}}}{\mathrm{max}}\hspace{0.50em}\mathbf{a}^{T}\mathbf{x}
	\end{equation}
	which is the pointwise maximum over a set of functions that are linear in $\mathbf{x}$, and is thus convex \cite{boyd_convex_2004}. Next, observe that we can similarly write the SLqP utility function equivalently as: 
	\begin{equation}
	f_{K_{q}}\left(\mathbf{x}\right)=\mathbf{1}^{T}\mathbf{x}-F_{K-K_{q}}\left(\mathbf{x}\right)
	\end{equation}
	which is the difference of a linear and a convex function, and thus concave overall in $\mathbf{x}$ \cite{boyd_convex_2004}.
\end{proof}

\begin{property}\label{SLqP_SGqP_nondecreasing}
The SGqP and SLqP utility functions are non-decreasing in each component.
\end{property}
\begin{proof}
For a fixed value of $q$, the SGqP (resp. SLqP) utility function is the pointwise maximum (resp. pointwise minimum) of non-decreasing linear functions and is therefore also non-decreasing.
\end{proof}

\begin{property}\label{special_cases_SGqP}
The SGqP utility function contains as special cases the sum and maximum utility functions.
\end{property}
\begin{proof}
	Observe that if $q=100$, we have $K=K_q$, in which case the SGqP utility function simply becomes the sum of all components of the vector:
	\[
F_{K}\left(\mathbf{x}\right)=\sum_{i=1}^{K}x_{i}^{\downarrow}=\mathbf{1}^{T}\mathbf{x}^{\downarrow}=\mathbf{1}^{T}\mathbf{x}
	\]
	Similarly, if we set $q=100/K$, we have $K_q=1$ and 
	\[
F_{1}\left(\mathbf{x}\right)=x_{1}^{\downarrow}=\underset{i=1,\ldots,K}{\mathrm{max}}x_{i}
	\]\end{proof}

\begin{property}\label{special_cases_SLqP}
	The SLqP utility function contains as special cases the sum and minimum utility functions.
\end{property}
\begin{proof}
	This property follows from similar reasoning to Property \ref{special_cases_SGqP}; we have
	\[
f_{K}\left(\mathbf{x}\right)=\sum_{i=1}^{K}x_{i}^{\uparrow}=\mathbf{1}^{T}\mathbf{x}^{\uparrow}=\mathbf{1}^{T}\mathbf{x}
	\]
	\[
f_{1}\left(\mathbf{x}\right)=x_{1}^{\uparrow}=\underset{i=1,\ldots,K}{\mathrm{min}}x_{i}
	\]
\end{proof}
\vspace{-1.00em}
\begin{property} \label{nonsmoothness_SGqP_SLqP}
The SGqP and SLqP utility functions are non-smooth for $K_q\neq{K}$.
\end{property}
\begin{proof}
This property is readily apparent from the form of the SGqP utility function in (\ref{SGqP_pointwise_maximum}); since it is the pointwise maximum of $_{K}C_{K_q}$ distinct linear functions, it follows that the function, although continuous, is not necessarily smooth \cite{todd_max-k-sums_2018}.\end{proof} 

We note that the $q=100$ case is an exception to Property \ref{nonsmoothness_SGqP_SLqP}: for $K_q=K$, both the SGqP and SLqP functions are equivalent to the sum utility function, which is smooth. 

\begin{property} The SGqP and SLqP functions are connected through the following symmetry property:
	\[F_{K_{q}}\left(\mathbf{x}\right)=-f_{K_{q}}\left(\mathbf{-x}\right)\]
\end{property}
\begin{proof}
	Following the definition of $\mathcal{A}_{K_{q}}$ in (\ref{binary_vectors_set}), we observe that 
	\[
	F_{K_{q}}\left(-\mathbf{y}\right)=\underset{\mathbf{a}\in\mathcal{A}_{K_{q}}}{\mathrm{max}}\hspace{0.50em}-\mathbf{a}^{T}\mathbf{y}=-\underset{\mathbf{a}\in\mathcal{A}_{K_{q}}}{\mathrm{min}}\hspace{0.50em}\mathbf{a}^{T}\mathbf{y}=-f_{K_{q}}\left(\mathbf{y}\right)
	\]
	The property follows from the change of variables $\mathbf{y}=-\mathbf{x}$.
\end{proof}
\begin{property}\label{ordering}
	For all $\mathbf{x}\in\mathbb{R}_{+}^{K}$, the SGqP and SLqP functions are ordered by the percentile number: for $K_{q_1}<K_{{q_2}}$, we have $f_{K_{q_1}}\left(\mathbf{x}\right)\leq f_{K_{{q_2}}}\left(\mathbf{x}\right)$ and $F_{K_{q_1}}\left(\mathbf{x}\right)\leq F_{K_{{q_2}}}\left(\mathbf{x}\right)$.
\end{property}
\begin{proof}
We illustrate this property for the SLqP function; the proof for the SGqP function is similar. Suppose that $K_{q_1}+i=K_{{q_2}}$ where $i\in\mathbb{Z_{++}}$. Then, from the definition of the SLqP function in (\ref{SLqP_definition}) we can write:
\[
f_{K_{q_{2}}}\left(\mathbf{x}\right)=f_{K_{q_{1}}}\left(\mathbf{x}\right)+x_{K_{q_{1}}+1}^{\uparrow}+\ldots+x_{K_{q_{1}}+i}^{\uparrow}
\]
The property then follows from the fact that $x_{k}^{\uparrow}\geq{0}$ for $k=1,\ldots,K$.
\end{proof}

To summarize, the SGqP and SLqP functions are non-smooth utilities. Due to their respective convexity and concavity, we can formulate minimization problems for the former and maximization problems for the latter utility. 

\section{Optimization for Parallel Gaussian Channels}\label{gaussian}
\subsection{Problem Formulation}
{\color{black}We are now ready to formally introduce the class of sum-least-$q^\mathrm{th}$-percentile rate maximization problems. We begin by considering the problem of SLqP rate maximization under a interference-free parallel channel model with unity channel gains. We study this setting for two key reasons. First, percentile rate problems have never been studied in this particular setting in the prior literature (other than the special max-min-rate and sum-rate problems \cite{zhi-quan_luo_dynamic_2008}). Second, the problem is convex in this simplified setting, enabling derivation of many useful properties of the optimal solution structure that are not possible in the more general non-convex and NP-hard interference-limited scenario for real-world cellular networks.}

We assume a total of $K$ transmitter-receiver pairs; for convenience, we refer to each pair as a user throughout this paper. Further, we denote the transmit power of user $k$ as $p_k$ and receiver additive white Gaussian noise (AWGN) power by $z_{k}$; the former is collected in the vector $\mathbf{p}$ while the latter is collected in the vector $\mathbf{z}$ for notational clarity. For simplicity of notation and without loss of generality, we subsequently assume that the users are ordered according to descending noise powers, i.e., $z_1\geq{z_2}\geq{\ldots}\geq{z_K}$. {\color{black}{In keeping with prior resource allocation works, we assume that these noise powers are known \cite{shen_fractional_2018-1,khan_optimizing_2020}}}. Assuming parallel channels with a gain of unity for all users, it follows that the rate achievable by user $k$ is given by
\begin{equation}\label{interference_free_rates}
r_{k}\left(\mathbf{p}\right)=\log\left(1+\frac{p_{k}}{z_{k}}\right)
\end{equation}

Our aim can be concisely expressed as follows: under a sum-power constraint of $P_T$ in the given setting, how should the powers be allocated to maximize the sum of the smallest $K_q$ rates in the network? The SLqP rate optimization problem that encapsulates this goal can be expressed as follows:
 \begin{subequations}\label{SPR_problem_interference_free}
 	\begin{align}
	\underset{\mathbf{p^{\mathit{}}}}{\mathrm{maximize}}\quad f_{K_{q}}\left(r_{1}\left(\mathbf{p}\right),\ldots,r_{K}\left(\mathbf{p}\right)\right)\hspace{2em}\label{SPR_obj_interference_free}\hspace{4.00em}\\
 	\mathrm{subject\,to}\quad p_{k}\geq0,\,k=1,\ldots,K\label{SPR_constraint_nonnegative}\hspace{7.30em}\\
 	\sum_{k=1}^{K}p_{k}\leq {P}_\mathrm{max}\label{SPR_constraint_total_power}\hspace{10.3em}
 	\end{align}
 \end{subequations}
Note that the order of the users here does not matter to the SlqP utility function.
\begin{remark}
	The percentile program in (\ref{SPR_problem_interference_free}) is equivalent to maximization of the sum-rate when $K_q=K$ and maximization of the minimum rate when $K_q=1$.
\end{remark}
\begin{proof}
This follows as a direct consequence of Property \ref{special_cases_SLqP}.
\end{proof}

\begin{remark}
Problem (\ref{SPR_problem_interference_free}) is a convex program.
\end{remark}
\begin{proof}
The convexity can be established using the composition rule, described in detail in \cite{boyd_convex_2004}. Consider a function $h:\mathbb{R}^{K}\mapsto\mathbb{R}$, and a sequence of functions $g_{k}:\mathbb{R}^{N}\mapsto\mathbb{R};\hspace{0.30em}k=1,\ldots,K$. Next, define the following composition:
\begin{equation}\label{composition_rule}
	h\left(g_{1}\left(\mathbf{x}\right),g_{2}\left(\mathbf{x}\right),\ldots,g_{K}\left(\mathbf{x}\right)\right):\mathbb{R}^{N}\mapsto\mathbb{R}
\end{equation}

This composition is concave in $\mathbf{x}$ if the following conditions are satisfied \cite{boyd_convex_2004}:
\begin{itemize}
		\item $h$ is concave and non-decreasing in each argument.
		\item Each of the functions $g_{k}:\mathbb{R}^{N}\mapsto\mathbb{R};\hspace{0.30em}k=1,\ldots,K$ is concave in $\mathbf{x}$.
\end{itemize}
	
We apply this rule to Problem (\ref{SPR_problem_interference_free}) as follows. First, we observe that the term inside the logarithm, $1+\frac{p_k}{z_k}$, is obviously linear (and hence concave) in $\mathbf{p}$. The logarithm function is concave and non-decreasing, hence it follows that $\mathrm{log}\left(1+\frac{p_k}{z_k}\right)$ is also concave in the powers. Thus, the functions $r_k\left(\mathbf{p}\right);i=1,\ldots,K$ are all concave in $\mathbf{p}$.
	
As established in Property \ref{SLqP_SGqP_nondecreasing}, the SLqP utility function is non-decreasing in each component. Furthermore, as shown in Property 1, it is also concave, since it is the pointwise minimum of a set of concave functions (which is also known to be concave according to \cite{boyd_convex_2004}).
	
Identifying $f_{K_q}$ as $h$, and $r_k$ as $g_k$ from \ref{composition_rule}, the concavity of the objective function in \ref{SPR_obj_interference_free} then follows straightforwardly. Since we are maximizing a concave function (subject to convex constraints), it follows that Problem (\ref{SPR_problem_interference_free}) is indeed a convex program.
\end{proof}

\begin{remark}\label{ordering_by_percentile}
	Denote by $f_{K_{q},\mathrm{opt}}$ the optimal objective value of Problem (\ref{SPR_problem_interference_free}). Then the optimal values are \textit{ordered} according to the percentile number: if $K_{q_1}<K_{{q_2}}$ then we must have $f_{K_{q_1},\mathrm{opt}}\leq f_{K_{{q_2}},\mathrm{opt}}$.
\end{remark}
\begin{proof}
	Let $\mathcal{R}\left(\mathbf{z},P_{T}\right)$ denote the set of $K$-tuples of achievable rates for all users given $\mathbf{z}$ and $P_T$, i.e.,
	\[
	\mathcal{R}\left(\mathbf{z},P_{T}\right)=\left\{ \left(r_{1}\left(\mathbf{p}\right),\ldots,r_{K}\left(\mathbf{p}\right)\right)\left|\begin{array}{c}
	\sum_{k=1}^{K}p_{k}\leq P_{T},\\
	p_{k}\geq0,k=1,\ldots,K
	\end{array}\right.\right\} 
	\]
	Then from Property \ref{ordering}, it follows that we must have 
	\begin{multline}
	\underset{\left(r_{1}\left(\mathbf{p}\right),\ldots,r_{K}\left(\mathbf{p}\right)\right)\in\mathcal{R}\left(\mathbf{z},P_{T}\right)}{\mathrm{max}}\quad f_{K_{q_1}}\left(r_{1}\left(\mathbf{p}\right),\ldots,r_{K}\left(\mathbf{p}\right)\right)\\
	\leq\underset{\left(r_{1}\left(\mathbf{p}\right),\ldots,r_{K}\left(\mathbf{p}\right)\right)\in\mathcal{R}\left(\mathbf{z},P_{T}\right)}{\mathrm{max}}\quad f_{K_{{q_2}}}\left(r_{1}\left(\mathbf{p}\right),\ldots,r_{K}\left(\mathbf{p}\right)\right)
	\end{multline}
	which yields the desired result. 
\end{proof}
	As a special case, we note that this inequality implies that the optimal sum-rate should be greater than or equal to the optimal min-rate; this subsumes the prior results presented in \cite{zhi-quan_luo_dynamic_2008}.

\begin{remark}\label{largest_K_minus_Kq_rates_equal}
Denote by $\ensuremath{r_{k,\mathrm{opt}}^{}}$ the rate achieved by the $k^\mathrm{th}$ user in the optimal solution for the problem in (\ref{SPR_problem_interference_free}). Then we must have $\ensuremath{r_{K_{q},opt}^{}=r_{K_{q}+1,opt}^{}=,\ldots,=r_{K,opt}^{}}$. 
\end{remark}
\begin{proof}
	This remark follows by contradiction. Suppose we are given an optimal solution with $r_{k,\mathrm{opt}}>r_{K_q,\mathrm{opt}}$ for some $\ensuremath{k\in\left\{K_{q}+1,K_{q}+2,\ldots,K\right\}}$. Then we can improve the objective by lowering $p_{k}$ and correspondingly increasing $p_{K_q}$ up until user $K_q$ achieves an identical rate to user $k$, thereby contradicting the optimality of the solution. \textit{Thus, in the optimal solution, the $K-K_q+1$ highest achieved rates must be identical. }
\end{proof}

A well-established illustration of the preceding remark is that the max-min-rate problem (i.e. $q=100/K$) leads to equal rates for all users.

\subsection{Solution Techniques and Implementation}\label{implementation}
Having established the convexity of Problem (\ref{SPR_problem_interference_free}), we note that there are a number of ways to numerically solve it. One method is to utilize the subgradient technique, described in detail in \cite{nesterov_subgradient_2014}; with decaying step sizes, the method is guaranteed to converge to the optimal solution. Alternatively, an easier way is to utilize the CVX package, which includes the SLqP and SGqP utilities in its standard library as the \verb#sum_smallest(x,k)# and \verb#sum_largest(x,k)# functions respectively. The package can then be utilized to numerically solve Problem (\ref{SPR_problem_interference_free}) using the standard interior-point methods \cite{CVX}.

\subsection{Numerical Results}\label{pg_results}
To better understand how the optimal solution changes for different values of the percentile parameter $q$, we construct a sample network with $K=6$ users, total power $P_\mathrm{max}=10$, and the following AWGN noise powers:
\[
\mathbf{z}=\left[0.1,\,0.05,\,250.1,\,200.4,\,5.4,\,3.7\right]^{T}
\]
\begin{figure}[t!]
	\begin{center} 
		\includegraphics[trim={0cm 0cm 0cm 0cm},clip, width=0.40\textwidth]{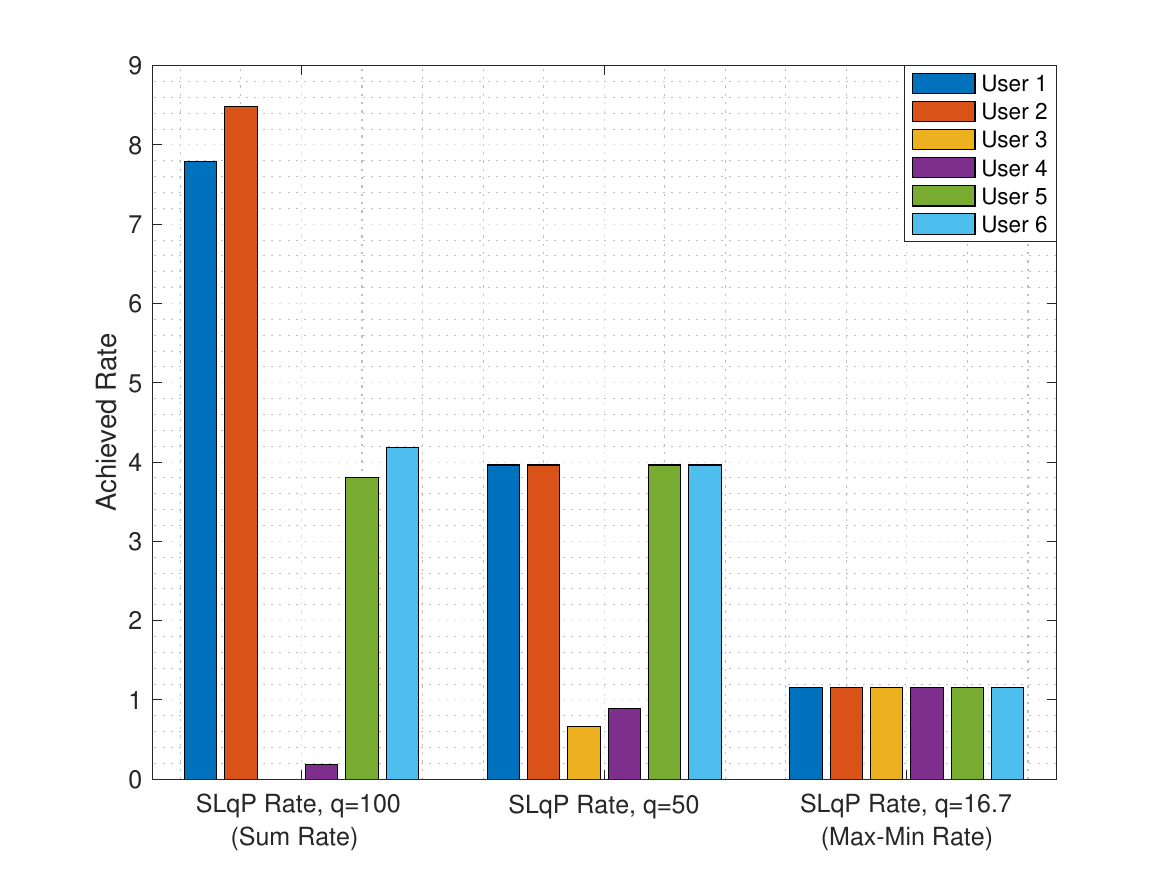}
		\caption{Achieved rates for optimal solutions to the parallel channel SLqP rate optimization problem for different values of $q$.}
		\centering
		\label{interference_free_results}
	\end{center}
\end{figure}
{\color{black}Figure \ref{interference_free_results} illustrates the rates achieved by users corresponding to the optimal solution of Problem (\ref{SPR_problem_interference_free}), obtained using the CVX convex solver, for $q=100$ (sum-rate), $50$ ($50^\mathrm{th}$ percentile) and $16.\overline{6}$ (minimum rate) respectively.} In particular, the $50^\mathrm{th}$ percentile setting maximizes the rates of the worst half (i.e., 3 out of 6) users. We  observe that the achieved rates vary significantly depending upon the percentile we choose to optimize. For the max-sum-rate problem, Users 1 and 2 achieve extremely high rates but this comes at the expense of zero and near-zero rates achieved by User 3 and User 4 respectively due to their high receiver noise powers. In contrast, for the max-min-rate problem, as expected from Property \ref{largest_K_minus_Kq_rates_equal}, all users achieve equal rates. This comes at the cost of considerably reduced rates for Users 1 and 2.

On the other hand, the $50^\mathrm{th}$ percentile SLqP rate optimization problem effects a favorable compromise: Users 3 and 4 achieve significantly improved rates compared to the max-sum-rate setting while the rates achieved by Users 1 and 2 are not curtailed as drastically as in the max-min-rate formulation. In moving from the $50^\mathrm{th}$ percentile problem to the max-min-rate problem, the rates achieved by Users 3 and 4 improve by only $73.6\%$ and $26.8\%$ respectively yet this is achieved at the expense of a $70.8\%$ reduction in the rates for Users 1, 2, 5, and 6.

These results illustrate a fundamental property of the SLqP rate utility: higher values of $q$ favor users with better transmission conditions; conversely, lower values of $q$ favor weaker users. Consequently, the value of $q$ allows us to control the tradeoff between allocating resources to either the weakest or strongest users in the network. The max-sum-rate and max-min-rate optimization problems are clearly the extreme endpoints of the class of SLqP rate optimization problems, and as such, neither may be desirable in terms of the compromises they give rise to by either favoring the strongest or weakest users in the network.

We also note that for the given example, we have $\ensuremath{f_{6,\mathrm{opt}}}=24.4$, $\ensuremath{f_{3,\mathrm{opt}}}=5.52$ and $\ensuremath{f_{1,\mathrm{opt}}}=1.15$; the ordering is as expected from Remark \ref{ordering_by_percentile}. The property from Remark \ref{largest_K_minus_Kq_rates_equal} also holds; as expected, for $q=50$, the $(K-K_q+1=4)$ largest rates are identical.

{\color{black}{These results also illustrate that there is no inherent \textit{optimum} choice of percentile. Ultimately, the choice of this parameter is up to the network operators; for instance, if they wish to maximize sum-rate (and favour the cell-centre users), they would set $q=100$. As discussed earlier, however, the main focus of our work is to help improve cell-edge service to enable new applications \cite{3gpp} which is emphasized by selecting lower percentiles (e.g., $q=5$ or $q=10$).
}}

\subsection{Least-$q^\mathrm{th}$-Percentile (LqP) Rate Optimization}
At this juncture, it is natural to question why we would choose to optimize the sum of the smallest $K_q$ rates rather than the $K_{q}^\mathrm{th}$ smallest rate (which we shall henceforth refer to as the least-$q^\mathrm{th}$-percentile (LqP) rate). It turns out that the optimal solution to the LqP rate maximization problem has undesirable properties which makes it unsuitable for use in the context of improving service for cell-edge users.

Observing that the $K_{q}^\mathrm{th}$ element in a vector can be written as the difference between the sum of the smallest $K_q$ elements and the sum of the smallest $K_q-1$ elements, it follows that the LqP rate maximization problem is given by
\begin{subequations}\label{PPR_problem_interference_free}
	\begin{align}
		\begin{split}
			\underset{\mathbf{p}}{\mathrm{maximize}}\quad f_{K_{q}}\left(r_{1}\left(\mathbf{p}\right),\ldots,r_{K}\left(\mathbf{p}\right)\right)\hspace{7.00em}\\
			-f_{K_{q}-1}\left(r_{1}\left(\mathbf{p}\right),\ldots,r_{K}\left(\mathbf{p}\right)\right)\label{LqP_rate}\hspace{3.00em}
		\end{split}\\
		\begin{split}
			\mathrm{subject\,to}\quad p_{k}\geq0,\,k=1,\ldots,K\label{PPR_constraint_nonnegative}\hspace{0.30em}
		\end{split}\\
		\begin{split}
			\hspace{5.2em}\sum_{k=1}^{K}p_{k}\leq{P}_\mathrm{max}\label{PPR_constraint_total_power}
		\end{split}
	\end{align}
\end{subequations}

\begin{remark} \label{PPR_zero_rates}
	Denote by $\ensuremath{r_{k,\mathrm{opt}}^{}}$ the rate achieved by the $k^\mathrm{th}$ user in the optimal solution for the problem in (\ref{PPR_problem_interference_free}). Then we must have $\ensuremath{\ensuremath{r_{1,opt}^{}=r_{2,opt}^{}=,\ldots,=r_{K_{q}-1,opt}^{}=0}}$.
\end{remark}
\begin{proof}
	Suppose that we are given an optimal solution to Problem (\ref{PPR_problem_interference_free}) in which $r_{k,opt}^{}\neq{0}$ for some user ${k\in\left\{ 1,2,\ldots,K_{q}-1\right\}}$ (and hence the power for this user, $p_k$, is non-zero). Since the LqP rate in (\ref{LqP_rate}), corresponding to the rate achieved by user $K_q$, does not depend on the rates achieved by users $K_q-1,\ldots,1$, it can always be improved by increasing $p_{K_q}$ by $p_k$ and setting user $k$'s power to zero. Note that this change ensures that the sum-power constraint in (\ref{PPR_constraint_total_power}) is respected. Thus, we conclude that the optimal solution to Problem (\ref{PPR_problem_interference_free}) would assign zero powers to the $K_q-1$ weakest users.
\end{proof}

We note that while the LqP utility function is non-concave for $K_q\neq{K}$ (since it is the difference of two concave functions), Remark \ref{PPR_zero_rates} allows us to straightforwardly derive the optimal solution for Problem (\ref{PPR_problem_interference_free}).

\begin{remark} \label{PPR_optimal_solution}
	The optimal solution to Problem (\ref{PPR_problem_interference_free}) can be found by solving the following max-min-rate problem:
	\begin{subequations}\label{PPR_problem_equiv_max_min}
		\begin{align}
			\underset{\mathbf{p}}{\mathrm{maximize}}\quad f_{1}\left(r_{K_q}\left(\mathbf{p}\right),\ldots,r_{K}\left(\mathbf{p}\right)\right)\label{PPR_obj_interference_free_max_min}\hspace{5.60em}\\
			\mathrm{subject\,to}\quad p_{k}\geq0,\,k=1,\ldots,K\label{PPR_constraint_nonnegative}\hspace{6.9em}\\
			\sum_{k=1}^{K}p_{k}\leq {P}_\mathrm{max}\label{PPR_constraint_total_power_max_min}\hspace{9.9em}
		\end{align}
	\end{subequations}
\end{remark}
\begin{proof}
	From Remark \ref{PPR_zero_rates}, it follows that since the rates achieved by users $1,\ldots,K_q-1$ are zero in the optimal solution, the LqP objective in (\ref{LqP_rate}) is equivalent to the smallest rate achieved among the rates achieved among the remaining users $K_q,K_q+1,\ldots,K$.
\end{proof} 

This assignment of zero rates to $K_q-1$ users makes the maximization of LqP rate utility undesirable. This result also holds true in the wireless cellular networks we consider in Section \ref{interference_channel}, since transmitting to users below the $q^\mathrm{th}$-percentile would lead to interference, thereby decreasing the LqP objective function value. We henceforth focus on SLqP-rate problems.

\section{Short-Term Sum-Percentile Rate Optimization}\label{interference_channel}
\subsection{Problem Formulation}
{\color{black}We are now ready to tackle the power control problem for SLqP rate optimization in the more realistic setting of an interference-limited network which models real-world cellular networks. Unlike the parallel Gaussian channel setting, the problem is now no longer convex, necessitating the development of a distinct algorithmic framework to solve it effectively.}

We consider a network with $K$ interfering single-antenna users with the channel from the transmitter of user $j$ to the receiver of user $k$ denoted by $h_{j\rightarrow k}^{}$. Furthermore, the transmit power and AWGN receiver noise power for user $k$ are denoted by $p_{k}^{}$ and ${\sigma}^{2}$ respectively. It follows that the rate achievable by user $k$ is given by
\begin{equation}\label{rates_expression}
r_{k}^{}\left(\mathbf{p}^{}\right)=\mathrm{log}\left(1+\frac{p_{k}^{}\left|h_{k\rightarrow k}^{}\right|^{2}}{\sum_{j\neq k}p_{j}^{}\left|h_{j\rightarrow k}^{}\right|^{2}+{\sigma}^{2}}\right)
\end{equation}

Accordingly, the SLqP rate optimization problem for the $n^\mathrm{th}$ time slot can be expressed as:
\begin{subequations}\label{SPR_problem_shortterm}
	\begin{align}
\underset{\mathbf{p^{\mathit{}}}}{\mathrm{maximize}}\quad f_{K_{q}}\left(r_{1}^{}\left(\mathbf{p}^{}\right),\ldots,r_{K}^{}\left(\mathbf{p}^{}\right)\right)\hspace{2.55em}\label{SPR_obj_shortterm}\\
\mathrm{subject\,to}\quad0\leq p_{k}^{}\leq P_{\mathrm{max}},\,k=1,\ldots,K\label{SPR_constraint_shortterm}\hspace{0.30em}
	\end{align}
\end{subequations}
where $P_T$ denotes the maximum per-link power constraint.

{\color{black}As with the parallel Gaussian setting, this problem formulation allows us to flexibly control the tradeoff between cell-centre and cell-edge users in wireless communication networks. In particular, to target the cell-edge users, we can choose smaller values of $q$. For example, with $q=5$ the problem in (\ref{SPR_problem_shortterm}) optimizes the $5^\mathrm{th}$-percentile throughput in a wireless cellular network. Therefore, this formulation achieves the practical goal of improving data rates for cell-edge users in accordance with the 3GPP and other targets for 6G networks \cite{3gpp,5gppp,ziegler_6g_2020}.}

Although similar in appearence to Problem (\ref{SPR_problem_interference_free}), this problem is considerably more challenging to solve.

\begin{theorem}\label{NP_hard_theorem}
	For a fixed value of $q$ such that $K_q>1$, the percentile program in (\ref{SPR_problem_shortterm}) is strongly NP-hard in the number of users $K$.
\end{theorem}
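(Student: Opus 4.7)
The plan is to reduce from the (strongly NP-hard) decision version of the maximum independent set problem: given a graph $G=(V,E)$ with $|V|=n$ and a target size $\alpha$, decide whether $G$ admits an independent set of size at least $\alpha$. I would embed $G$ into an instance of (\ref{SPR_problem_shortterm}) as follows. Introduce $n$ \emph{graph users}, one per vertex, along with $m$ auxiliary \emph{dummy users} that are mutually isolated and also isolated from the graph users. For the graph users, set direct gains $|h_{k\to k}|^2=1$ and cross gains $|h_{i\to j}|^2=M$ precisely when $(i,j)\in E$ (and $0$ otherwise), where $M$ is a sufficiently large polynomial in $n$ (say $M=n^4$). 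The dummies' direct gains are chosen so that their noise-limited rate $R^\star$ strictly exceeds $R:=\log(1+P_\mathrm{max}/\sigma^2)$. Finally, I pick $m$ so that the percentile number for the resulting instance satisfies $K_q=n$. Since $K_q=\lceil q(n+m)/100\rceil$ advances by at most one as $m$ increments and the feasible window for $m$ has width $100/q>0$, such an integer always exists once $n$ is sufficiently large.

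The key structural claim is that in every optimal solution the active set $T:=\{i\in V: p_i>0\}$ is an independent set of $G$ and every $i\in T$ transmits at $p_i=P_\mathrm{max}$. Starting from any feasible $\mathbf{p}$, I would iteratively zero out the lower-power endpoint of every in-support edge, arguing that each such modification is non-decreasing for the SLqP objective: the removed user sheds at most a rate of $O(1/M)$ (forced by the large cross gain), while each of its neighbors gains rate, and for polynomially large $M$ the latter dominates. Since the dummies are tuned to sit strictly above the $K_q=n$ smallest rates, only graph-user rates enter the objective sum. Once $T$ is an independent set, user $i$'s rate simplifies to $\log(1+p_i)$, which is strictly increasing in $p_i$, so $p_i=P_\mathrm{max}$ follows. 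Hence the optimal objective equals $|T^\star|\cdot R$, where $|T^\star|$ is the maximum independent-set size of $G$, and the optimum is $\geq \alpha R$ if and only if $G$ has an independent set of size at least $\alpha$. The construction is polynomial-time and uses polynomially bounded channel parameters, so strong NP-hardness is preserved, and $K_q=n\geq 2>1$ holds automatically for any non-trivial graph.

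The hard part will be making the swap argument fully rigorous. Because Problem (\ref{SPR_problem_shortterm}) is non-smooth and non-convex, I cannot appeal to first-order optimality; instead I must track combinatorially how zeroing a user's power changes the sort order that defines the $K_q$ smallest entries, and then show that the resulting change in SLqP is non-negative uniformly across all such modifications, even when the chosen endpoint and its neighbors can themselves move between the ``smallest $K_q$'' bucket and the complement. A secondary concern is the rounding implicit in $K_q=\lceil qK/100\rceil$: for any fixed $q\in(0,100]$ one must verify that a valid $m$ exists, which follows from the positive width $100/q$ of the feasibility window but should be spelled out so the reduction is well defined for every sufficiently large $n$.
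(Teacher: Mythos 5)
Your reduction has the same skeleton as the paper's: pad the graph gadget with interference-free users whose rates are pinned strictly above every graph user's rate, tune the number of pad users so that the $K_q$ smallest rates are exactly the $n$ graph users, and thereby collapse the SLqP objective to the sum-rate over the embedded interference graph, which is then handled via the Luo--Zhang maximum-independent-set reduction. (The paper realizes the padding the other way around -- $K-K_q$ isolated vertices sitting above a connected $K_q$-vertex component -- but this is cosmetic.) One remark: the "hard part" you flag about tracking which users fall in the smallest-$K_q$ bucket is actually a non-issue in your construction, since the dummies' rates exceed $\log(1+P_{\mathrm{max}}/\sigma^2)$ for \emph{every} feasible $\mathbf{p}$, so the objective is globally equal to the smooth sum of graph-user rates and no sorting combinatorics arise.

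The genuine gap is the structural claim itself. Your iterative move -- zero out the lower-power endpoint of an in-support edge -- is \emph{not} non-decreasing for the sum-rate. Take a single edge $(i,j)$ with $p_i=p_j=p$, direct gains $1$, cross gain $M$, noise $\sigma^2$. Before the move the two users contribute $2\log\bigl(1+p/(Mp+\sigma^2)\bigr)$; after zeroing $p_i$ they contribute $\log(1+p/\sigma^2)$. As $p\to 0$ the difference behaves like $p/\sigma^2 - 2p/\sigma^2 = -p/\sigma^2 < 0$, so for small powers the move strictly hurts, no matter how large $M$ is. Hence the argument does not establish that optimal solutions have independent, binary support. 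The paper (following Luo--Zhang) closes exactly this hole differently: with the normalized cross gains set to $LK_q^2$ and noise $L$ with $L>K_q$, the diagonal entries of the Hessian of the sum-rate over the connected component are non-negative, so the objective is convex in each individual power variable and the maximum over the box $[0,P_{\mathrm{max}}]^K$ is attained at a binary vertex; a separate rate upper bound then shows a binary allocation reaches $\lvert I\rvert\log(1+1/L)$ only if its active set is independent. You would need to either verify the analogous per-coordinate convexity for your parameter choice ($|h_{k\to k}|^2=1$, $M=n^4$, $\sigma^2$) or replace the swap argument entirely; as written, the reduction's correctness is not established.
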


The proof, detailed in Appendix \ref{NP_hardness_proof}, is based on a polynomial-time reduction from the maximum independent set problem. This reduction is considerably more challenging than the sum-rate setting since we have to consider \textit{the entire class} of SLqP rate maximization problems which, in general, can have a countably infinite number of instances as the number of users grows. 

Further, we also remark that the proof of NP-hardness holds only for non-minimum percentiles, i.e., $K_q>1$ (corresponding to $q>100/K$). This is in agreement with the prior results in the literature: in contrast to the max-sum-rate problem, the max-min-rate power control problem can be expressed as a parametric linear program and solved to optimality in polynomial time \cite{razaviyayn_linear_2013}.

The NP-hardness of Problem (\ref{SPR_problem_shortterm}) has a major implication for solution techniques: finding the global optimum is computationally intractable; thus, reaching directional stationary points is the best we can hope to achieve with polynomial-time (i.e., computationally efficient) algorithms \cite{zhi-quan_luo_dynamic_2008}.

\begin{remark}
	Problem (\ref{SPR_problem_shortterm}) is non-convex and non-smooth for $K_{q}\notin\left\{ 1,K\right\}$.
\end{remark}
To establish the non-convexity, we note that the objective in (\ref{SPR_obj_shortterm}) can be expressed as the pointwise maximum of the sum of every subset of $K_q$ users in the network. The sum-of-rates function (for $K_q>1$) is well-known to be non-convex in the powers \cite{shen_fractional_2018-1,shi_iteratively_2011}; hence, the pointwise maximum of these sums is also non-convex in general.

The non-smoothness follows directly from Property \ref{nonsmoothness_SGqP_SLqP} and has important consequences for the choice of optimization algorithm for solving Problem (\ref{SPR_problem_shortterm}): first-order methods like steepest descent cannot be directly implemented \cite{burke_robust_2005}, while second-order algorithms like Newton's method are similarly inapplicable since the Hessian is not defined \cite{boyd_convex_2004}.

In order to deal with non-smooth problems (such as max-min-rate), the min-max WMSE \cite{razaviyayn_linear_2013}, FP \cite{shen_fractional_2018}, and numerous prior \cite{scutari_parallel_2017} works all introduce additional variables and constraints to `smooth' the problem. Introducing a slack variable $t$ and applying this approach to Problem (\ref{SPR_problem_shortterm}), we obtain the following equivalent smooth problem:
\begin{subequations}\label{SPR_problem_smooth}
	\begin{align}
	\underset{t,\hspace{0.10em}\mathbf{p^{\mathit{}}}}{\mathrm{maximize}}\quad t\hspace{2em}\label{SPR_obj_smooth}\hspace{12.30em}\\
\mathrm{subject\,to}\quad\sum_{l=1}^{K_{q}}r_{i_{l}}^{}\left(\mathbf{p}^{}\right)\geq t,\,\begin{array}{c}
i_{k}\in\left\{ 1,\ldots,K\right\} \\
i_{j}\neq{i}_{k},\forall{j\neq{k}}
\end{array}\label{SPR_constraint_smooth}\\
0\leq p_{k}^{}\leq P_{\mathrm{max}},\,k=1,\ldots,K\label{SPR_constraint_smooth_power}\hspace{1.30em}
	\end{align}
\end{subequations}

This problem is now amenable to gradient-based techniques. Yet this smoothed reformulation suffers from a serious issue which makes its optimization computationally intractable. To illustrate this, we consider the following example. Suppose that we have a SISO network with 4 users, and wish to optimize the $50^\mathrm{th}$-percentile throughput. This means that $K_q=0.50\times4=2$, i.e., we are maximizing the sum of the smallest two rates.

\begin{subequations}
	\begin{align}
		\underset{\mathbf{p},t}{\mathrm{maximize}}\quad t\hspace{12.40em}\label{smoothed_SLqP}\\
		\mathrm{subject\,to}\quad0\leq p_{k}^{}\leq P_{\mathrm{max}},\,k=1,\ldots,4\label{}\hspace{0.60em}\\
		r_{1}\left(\mathbf{p}\right)+r_{2}\left(\mathbf{p}\right)\geq t\label{c11}\hspace{5.00em}\\		r_{1}\left(\mathbf{p}\right)+r_{3}\left(\mathbf{p}\right)\geq t\label{c21}\hspace{5.00em}\\	r_{1}\left(\mathbf{p}\right)+r_{4}\left(\mathbf{p}\right)\geq t\label{c31}\hspace{5.00em}\\		r_{2}\left(\mathbf{p}\right)+r_{3}\left(\mathbf{p}\right)\geq t\label{c41}\hspace{5.00em}\\		r_{2}\left(\mathbf{p}\right)+r_{4}\left(\mathbf{p}\right)\geq t\label{c51}\hspace{5.00em}\\
		r_{3}\left(\mathbf{p}\right)+r_{4}\left(\mathbf{p}\right)\geq t\label{c61}\hspace{5.00em}
	\end{align}
\end{subequations}

In other words, we wish to maximize the slack variable $t$ such that the sum of \textit{any two rates} exceeds it. There are 
\[
_{4}C_{2}=\frac{4!}{\left(4-2\right)!2!}=6
\]
ways of choosing 2 users out of 4; hence, the number of additional constraints introduced is 6.

In general, for a network with $K$ users and optimizing the throughput for the weakest $K_q$ users, the smoothed variant of the SLqP rate maximization problem will comprise \[_{K}C_{K_q}=\frac{K!}{\left(K-K_q\right)!K_q!}\] constraints. 

The issue now becomes obvious: even for very small problem sizes, dealing with this combinatorial number of constraints quickly becomes impractical. To make matters worse, the constraints (e.g. (\ref{c11}-\ref{c61})) are in a \textit{sum-of-rates} form, which is known to be non-convex \cite{shi_iteratively_2011,shen_fractional_2018,shen_fractional_2018-1}. Therefore, applying the prior techniques incurs formidable computational complexity.

In summary, published approaches used to optimize classical metrics do not apply to our percentile problem. Therefore, we seek an alternative method which allows us to overcome these formidable challenges.
\vspace{-1.0em}
\subsection{Proposed Approaches}
To derive effective optimization strategies for the problem in (\ref{SPR_problem_shortterm}), we develop techniques to transform it into equivalent block-concave forms. Specifically, we develop two algorithmic approaches that allow for iterative optimization of the original problem to stationarity. We discuss each of these separately in the subsequent subsections.
\vspace{0.5em}
\hrule
\vspace{0.5em}
\subsubsection{\textbf{Quadratic Fractional Transform}}
Our first technique utilizes the quadratic fractional transform (QFT), first proposed in \cite{benson_global_2004} by Benson and subsequently exploited by Shen and Yu in \cite{shen_fractional_2018-1} primarily for weighted sum-rate maximization problems. As explained in the previous section, while this transform has also been utilized for the non-smooth max-min-rate optimization problem \cite{shen_fractional_2018-1}, the direct application of this transform to Problem (\ref{SPR_problem_shortterm}) for non-minimum percentiles would be based on the aforementioned hypograph form with $_{K}C_{K_q}$ constraints. In contrast, we develop an approach that allows us to transform the original problem while side-stepping the associated combinatorial number of non-convex constraints.

We begin by re-stating the transform below:
\begin{lemma}\label{quadratic_transform}
	Let $A\left(\mathbf{p}\right):\mathbb{R}^{K}\mapsto\mathbb{R}_{+}$, $B\left(\mathbf{p}\right):\mathbb{R}^{K}\mapsto\mathbb{R}_{++}$ and $\mathcal{P\subseteq}\mathbb{R}^{K}$. Then the fractional optimization problem
	\begin{subequations}
		\begin{align}
		\underset{\mathbf{p^{\mathit{}}}}{\mathrm{maximize}}\quad\frac{A\left(\mathbf{p}\right)}{B\left(\mathbf{p}\right)}\hspace{5.0em}\label{quad_transform_orig_obj}\\
		\mathrm{subject\,to}\quad\mathbf{p}\in\mathcal{P}\hspace{4.9em}
		\end{align}
	\end{subequations}
is equivalent to the following auxiliary optimization problem
	\begin{subequations}
	\begin{align}
\underset{x,\mathbf{p^{\mathit{}}}}{\mathrm{maximize}}\quad2x\sqrt{A\left(\mathbf{p}\right)}-x^{2}B\left(\mathbf{p}\right)\hspace{0.5em}\label{quadratic_transform_aux_obj}\\
		\mathrm{subject\,to}\quad\mathbf{p}\in\mathcal{P}\hspace{6.8em}
	\end{align}
\end{subequations}
where the equivalence is in the sense of both the optimal variables and objective function value.
\end{lemma}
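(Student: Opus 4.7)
My plan is to establish equivalence by partially optimizing the auxiliary objective with respect to $x$ for each fixed $\mathbf{p}\in\mathcal{P}$, and showing that the reduced problem coincides with the original fractional program. The feasible sets for $\mathbf{p}$ are identical in both problems, so it suffices to reconcile the objective values after eliminating $x$.

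First, I would fix an arbitrary $\mathbf{p}\in\mathcal{P}$ and define $g_{\mathbf{p}}(x) = 2x\sqrt{A(\mathbf{p})} - x^{2} B(\mathbf{p})$. Since $B(\mathbf{p}) > 0$ by hypothesis, $g_{\mathbf{p}}$ is a strictly concave quadratic in $x$ on $\mathbb{R}$, so it attains a unique global maximum at the stationary point obtained from $g_{\mathbf{p}}'(x)=0$, namely
\[
x^{\star}(\mathbf{p}) = \frac{\sqrt{A(\mathbf{p})}}{B(\mathbf{p})}.
\]
Substituting back yields
\[
g_{\mathbf{p}}\bigl(x^{\star}(\mathbf{p})\bigr) = \frac{2A(\mathbf{p})}{B(\mathbf{p})} - \frac{A(\mathbf{p})}{B(\mathbf{p})} = \frac{A(\mathbf{p})}{B(\mathbf{p})},
\]
which exactly matches the original objective in \eqref{quad_transform_orig_obj}. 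Note that this argument goes through uniformly even when $A(\mathbf{p}) = 0$, in which case $x^{\star}(\mathbf{p}) = 0$ and both objectives evaluate to $0$; the non-negativity of $A$ only enters to guarantee $\sqrt{A(\mathbf{p})}$ is real, consistent with the statement.

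Having optimized $x$ out pointwise, I would then invoke the standard interchange of maxima: since $\mathcal{P}$ does not depend on $x$ and $\mathbb{R}$ does not depend on $\mathbf{p}$, we may write
\[
\max_{\mathbf{p}\in\mathcal{P}}\; \max_{x\in\mathbb{R}}\; \bigl[2x\sqrt{A(\mathbf{p})} - x^{2}B(\mathbf{p})\bigr] = \max_{\mathbf{p}\in\mathcal{P}}\; \frac{A(\mathbf{p})}{B(\mathbf{p})},
\]
which proves equivalence of the optimal objective values. Equivalence of the optimal variables follows because any optimizer $\mathbf{p}^{\star}$ of the original problem, paired with $x^{\star}(\mathbf{p}^{\star})$, attains the auxiliary optimum, while conversely any maximizer $(\mathbf{p}^{\star},x^{\star})$ of the auxiliary problem must satisfy $x^{\star} = \sqrt{A(\mathbf{p}^{\star})}/B(\mathbf{p}^{\star})$ by strict concavity in $x$, and hence $\mathbf{p}^{\star}$ attains the original maximum.

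There is no single hard step here; the proof is essentially a one-variable concave-quadratic optimization plus a routine max-swap. The only delicate point I would be careful to handle explicitly is the degenerate case $A(\mathbf{p})=0$, where the optimal $x$ collapses to zero and one might worry about the non-differentiability of $\sqrt{\cdot}$ at the origin; but since $g_{\mathbf{p}}$ is smooth in $x$ for every fixed $\mathbf{p}$, the partial maximization argument is unaffected.
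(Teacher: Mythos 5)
Your proposal is correct and follows essentially the same route as the paper's proof: maximize the concave quadratic in $x$ via the first-order condition to get $x=\sqrt{A(\mathbf{p})}/B(\mathbf{p})$, substitute back to recover $A(\mathbf{p})/B(\mathbf{p})$, and conclude equivalence. Your additional remarks on the max-interchange, strict concavity forcing the form of the optimal $x$, and the degenerate case $A(\mathbf{p})=0$ are fine elaborations of the same argument rather than a different approach.
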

\begin{proof}
	Observe that the objective in (\ref{quadratic_transform_aux_obj}) is concave in the auxiliary variable $x$. Furthermore, setting the first order condition with respect to this variable, we obtain:
	\[
	\frac{\partial}{\partial x}\left\{ 2x\sqrt{A\left(\mathbf{p}\right)}-x^{2}B\left(\mathbf{p}\right)\right\} =0\Rightarrow x=\frac{\sqrt{A\left(\mathbf{p}\right)}}{B\left(\mathbf{p}\right)}
	\]
	Substituting this value of $x$ back into the objective in (\ref{quadratic_transform_aux_obj}) we obtain the original objective in (\ref{quad_transform_orig_obj}).
\end{proof}
Based on this transform, we now proceed to derive an equivalent reformulation of the problem in (\ref{SPR_problem_shortterm}).

\begin{theorem}\label{quadratic_SPR_equivalence}
The optimization problem in (\ref{SPR_problem_shortterm}) is equivalent to the following auxiliary problem:
\begin{subequations}\label{SPR_problem_quad}
	\begin{align}
	\underset{\mathbf{x}^{},\mathbf{p^{\mathit{}}}}{\mathrm{maximize}}\quad f_{K_{q}}\left(\hat{r}_{1}^{}\left(x_{1}^{},\mathbf{p}^{}\right),\ldots,\hat{r}_{K}^{}\left(x_{K}^{},\mathbf{p}^{}\right)\right)\hspace{0.em}\label{SPR_obj_quad}\\
	\mathrm{subject\,to}\quad0\leq p_{k}^{}\leq P_{\mathrm{max}},\,k=1,\ldots,K\label{SPR_constraint_shortterm_quad}\hspace{1.00em}
	\end{align}
\end{subequations}
where
\begin{multline}
\hat{r}_{k}^{}\left(x_{k}^{},\mathbf{p}^{}\right)=\mathrm{log}\left(1+2x_{k}^{}\sqrt{A_{k}\left(\mathbf{p}^{}\right)}-\right.\left.\left(x_{k}^{}\right)^{2}B_{k}\left(\mathbf{p}^{}\right)\right)\label{auxiliary_rate_QFT}
\end{multline}
and $A_{k}\left(\mathbf{p}^{}\right)$ and $B_{k}\left(\mathbf{p}^{}\right)$ represent the signal and interference-plus-noise powers at the $k^{th}$ user respectively, i.e.,
\begin{subequations}
	\begin{align}
	A_{k}\left(\mathbf{p}^{}\right)=p_{k}^{}\left|h_{k\rightarrow k}^{}\right|^{2}\label{A_value}\hspace{3.82em}\\
	B_{k}\left(\mathbf{p}^{}\right)=\sum_{j\neq k}p_{j}^{}\left|h_{j\rightarrow k}^{}\right|^{2}+\sigma^{2}\label{B_value}
	\end{align}
\end{subequations}
The equivalence between the two problems is in the sense of the optimal variables and objective function values.
\end{theorem}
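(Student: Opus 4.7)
The plan is to apply the Quadratic Fractional Transform of Lemma \ref{quadratic_transform} to each user's SINR ratio individually, introducing per-user auxiliary variables $x_1,\ldots,x_K$ that are fully decoupled across users. The equivalence will then follow by pushing the inner maximization over $\mathbf{x}$ outside of $f_{K_q}$, which is the step that genuinely uses the structural properties of SLqP established in Section \ref{properties}---specifically non-decreasing monotonicity (Property \ref{SLqP_SGqP_nondecreasing}).

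First, I would fix $\mathbf{p}$ and observe that $r_k(\mathbf{p}) = \log\!\bigl(1 + A_k(\mathbf{p})/B_k(\mathbf{p})\bigr)$ with $B_k(\mathbf{p}) \geq \sigma^{2} > 0$ and $A_k(\mathbf{p}) \geq 0$, so the hypotheses of Lemma \ref{quadratic_transform} are satisfied. Applying the transform to $A_k/B_k$ (and using the monotone increasingness of $\log(1+\cdot)$) yields
\[
r_k(\mathbf{p}) \;=\; \max_{x_k \in \mathbb{R}} \hat{r}_k(x_k, \mathbf{p}),
\]
with the unique maximizer $x_k^{\star}(\mathbf{p}) = \sqrt{A_k(\mathbf{p})}/B_k(\mathbf{p})$. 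Next, I would lift this pointwise identity to the SLqP objective. Because $x_k$ appears only inside $\hat{r}_k$, the joint maximization over $\mathbf{x}$ decouples coordinate-wise; combined with Property \ref{SLqP_SGqP_nondecreasing}, the choice $x_k = x_k^{\star}(\mathbf{p})$ for every $k$ simultaneously maximizes every argument of $f_{K_q}$, yielding
\[
f_{K_q}\!\bigl(r_1(\mathbf{p}),\ldots,r_K(\mathbf{p})\bigr) \;=\; \max_{\mathbf{x}}\; f_{K_q}\!\bigl(\hat{r}_1(x_1,\mathbf{p}),\ldots,\hat{r}_K(x_K,\mathbf{p})\bigr).
\]
The reverse inequality follows from $\hat{r}_k(x_k,\mathbf{p}) \leq r_k(\mathbf{p})$ together with monotonicity. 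Finally, taking $\max$ over $\mathbf{p}\in[0,P_{\mathrm{max}}]^{K}$ on both sides equates the two optimal objective values; an optimal $\mathbf{p}^{\star}$ of (\ref{SPR_problem_shortterm}), paired with $x_k = x_k^{\star}(\mathbf{p}^{\star})$, gives an optimal $(\mathbf{x}^{\star},\mathbf{p}^{\star})$ for (\ref{SPR_problem_quad}), and conversely, establishing equivalence in both variables and objective.

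The main obstacle is the interchange of the per-user maximization over $\mathbf{x}$ with the non-smooth, non-concave outer function $f_{K_q}$. The standard route through the hypograph/slack reformulation, as discussed around (\ref{SPR_problem_smooth}), would introduce the combinatorial family of $\,_KC_{K_q}$ sum-of-rates constraints, and applying QFT to each would be intractable. The trick that makes the present argument work is that the non-decreasing property of $f_{K_q}$, together with the \emph{separability} of the auxiliary variables across users, lets us carry out all $K$ optimizations over $x_k$ independently and pull them through $f_{K_q}$ in a single sweep---bypassing the combinatorial blow-up entirely. Everything else in the proof is a direct invocation of Lemma \ref{quadratic_transform} and algebra.
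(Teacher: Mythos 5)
Your proposal is correct and follows essentially the same route as the paper's Appendix~\ref{h_quad_proof}: apply the quadratic transform per user inside each rate expression, pull the inner maximization through the monotone $\log$, and then through the non-decreasing, coordinate-separable $f_{K_q}$. Your explicit two-sided inequality argument for the interchange is just a slightly more careful writing of the paper's final step, so no substantive difference.
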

The details of the proof are provided in Appendix \ref{h_quad_proof}.

The problem in (\ref{SPR_problem_quad}) is now in a form that is amenable to a cyclic optimization strategy. First, we observe that with $\mathbf{p}^{}$ fixed, an optimal choice of $\mathbf{x}^{}$ to maximize the auxiliary objective in (\ref{SPR_obj_quad}) is given by Lemma \ref{quadratic_transform} as
\begin{equation}\label{optimal_quadratic_variables}
x_{k}^{}=\frac{\sqrt{A_{k}\left(\mathbf{p}^{}\right)}}{B_{k}\left(\mathbf{p}^{}\right)},\quad{k}{=}{1,\ldots,K}
\end{equation}
where $A_{k}\left(\mathbf{p}^{}\right)$ and $B_{k}\left(\mathbf{p}^{}\right)$ are given as in (\ref{A_value}) and (\ref{B_value}) respectively. 

\begin{remark}\label{optimal_updates_uniqueness}
It is crucial to note that, unlike other uses of fractional programming in the literature \cite{shen_fractional_2018-1,wei_noma_2019,khan_optimizing_2020}, $f_{K_{q}}\left(\cdot\right)$ is not strictly concave; hence, there may be (infinitely many) other optimal choices of $\mathbf{x}^{}$ that achieve an identical maximum value of the auxiliary objective. However, the update in (\ref{optimal_quadratic_variables}) uniquely preserves the equivalence with the original problem in (\ref{SPR_problem_shortterm}).
\end{remark}

We next turn our attention to optimizing the power variables. To do so, we make use of the following critical insight:
\begin{remark}\label{quad_block_concave}
	The auxiliary objective function in (\ref{SPR_obj_quad}) is concave in the power variables $\mathbf{p}^{}$ when the auxiliary variables $\mathbf{x}^{}$ are fixed.
\end{remark}
This block-concavity can be understood as follows: observe that the terms $A_{k}\left(\mathbf{p}^{}\right)$ and $B_{k}\left(\mathbf{p}^{}\right)$ are linear and affine functions of $\mathbf{p}^{}$ respectively and hence concave. From (\ref{optimal_quadratic_variables}), we note that the auxiliary variables $x_{k}^{}$ are always non-negative owing to the non-negativity of $A_{k}\left(\mathbf{p}^{}\right)$ and $B_{k}\left(\mathbf{p}^{}\right)$. Taken together with the concavity and monotonicity of the logarithm function, it follows that $\tilde{r}_{k}^{}\left(x_{k}^{},\mathbf{p}^{}\right)$ is concave in $\mathbf{p}^{}$. Finally, from Properties \ref{convexity_concavity} and \ref{SLqP_SGqP_nondecreasing}, we note that $f_{K_{q}}\left(\cdot\right)$ is concave and non-decreasing in each component. Therefore, similar to the parallel Gaussian channel setting, it follows from the composition rule that the auxiliary objective in (\ref{SPR_obj_quad}) is concave in the power variables when the auxiliary variables are fixed. Thus, an optimal $\mathbf{p}^{}$ can be obtained straightforwardly when $\mathbf{x}^{}$ is fixed by solving a convex optimization problem using CVX or the subgradient method as discussed earlier.

Combining these updates together, Algorithm 1 summarizes the proposed QFT approach for maximizing the SLqP rate in (\ref{SPR_obj_shortterm}).
\begin{algorithm}
	\caption{QFT Algorithm for SLqP Rate Maximization}
	\label{QFTAlg}
	\begin{algorithmic}[1]
		\State \textbf{initialize} $\mathbf{p}^{}$
		\For{$i=1,\ldots$}
		\State \textbf{update} $\mathbf{x}^{}$ using (\ref{optimal_quadratic_variables}).
		\State \textbf{update} $\mathbf{p}^{}$ by solving (\ref{SPR_problem_quad}) for fixed $\mathbf{x}^{}$.
		\EndFor
		\State \textbf{until} some convergence criterion is met.
	\end{algorithmic}
\end{algorithm}

	\begin{figure*}
	\small
	\begin{subequations}
		\begin{align}
			\hspace{-9em}f_{K_{q}}\left(r_{1}^{}\left(\mathbf{p}^{}\left[i+1\right]\right),\ldots,r_{K}^{}\left(\mathbf{p}^{}\left[i+1\right]\right)\right)=f_{K_{q}}\left(\tilde{r}_{1}^{}\left(x_{1}^{}\left[i+1\right],\mathbf{p}^{}\left[i+1\right]\right),\ldots,\tilde{r}_{K}^{}\left(x_{K}^{}\left[i+1\right],\mathbf{p}^{}\left[i+1\right]\right)\right)\hspace{-6em}\label{convergence_proof_line_1}\\
			\geq f_{K_{q}}\left(\tilde{r}_{1}^{}\left(x_{1}^{}\left[i+1\right],\mathbf{p}^{}\left[i\right]\right),\ldots,\tilde{r}_{K}^{}\left(x_{K}^{}\left[i+1\right],\mathbf{p}^{}\left[i\right]\right)\right)\hspace{-2.45em}\label{convergence_proof_line_2}\\
			\geq f_{K_{q}}\left(\tilde{r}_{1}^{}\left(x_{1}^{}\left[i\right],\mathbf{p}^{}\left[i\right]\right),\ldots,\tilde{r}_{K}^{}\left(x_{K}^{}\left[i\right],\mathbf{p}^{}\left[i\right]\right)\right)\hspace{1.10em}\label{convergence_proof_line_3}\\
			=f_{K_{q}}\left(r_{1}^{}\left(\mathbf{p}^{}\left[i\right]\right),\ldots,r_{K}^{}\left(\mathbf{p}^{}\left[i\right]\right)\right)\hspace{6.95em}\label{convergence_proof_line_4}
		\end{align}
	\end{subequations}
	\hrule{}
\end{figure*}
\begin{theorem}\label{QFT_nondecreasing}
Algorithm \ref{QFTAlg} is non-decreasing in the SLqP rate objective given in (\ref{SPR_obj_shortterm}) after each iteration.
\end{theorem}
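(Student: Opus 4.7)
The plan is to interpret Algorithm~\ref{QFTAlg} as a minorization-maximization scheme and to justify, one line at a time, the four-line chain displayed just above the theorem statement. The key fact that I will lean on throughout is the following consequence of Lemma~\ref{quadratic_transform}: for every $\mathbf{p}$ and every $x_k\in\mathbb{R}$, the quadratic surrogate inside the logarithm of $\tilde{r}_k(x_k,\mathbf{p})$ is at most $A_k(\mathbf{p})/B_k(\mathbf{p})$, with equality if and only if $x_k$ takes the value prescribed by (\ref{optimal_quadratic_variables}). Combined with the monotonicity of the logarithm, this yields the pointwise bound $\tilde{r}_k(x_k,\mathbf{p})\leq r_k(\mathbf{p})$ that is tight precisely at the QFT-optimal $x_k$.

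Granted this, the argument splits naturally into two equalities and two inequalities. The equalities in (\ref{convergence_proof_line_1}) and (\ref{convergence_proof_line_4}) follow directly: when $\mathbf{x}$ is chosen via (\ref{optimal_quadratic_variables}) for the current $\mathbf{p}$, each auxiliary rate coincides with the true rate, so applying $f_{K_q}(\cdot)$ to both sides preserves the identity. For the inequality in (\ref{convergence_proof_line_3}), I would observe that $\mathbf{x}[i+1]$ is the componentwise maximizer of each $\tilde{r}_k(\cdot,\mathbf{p}[i])$; since Property~\ref{SLqP_SGqP_nondecreasing} guarantees that $f_{K_q}$ is non-decreasing in every argument, the componentwise improvement carries through to the composite objective. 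For the inequality in (\ref{convergence_proof_line_2}), I would appeal to Remark~\ref{quad_block_concave}: with $\mathbf{x}[i+1]$ held fixed, the auxiliary objective is concave in $\mathbf{p}$, so the global maximizer $\mathbf{p}[i+1]$ returned by CVX (or any convex solver) cannot yield a smaller value than $\mathbf{p}[i]$.

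Chaining the four relations produces $f_{K_q}(r(\mathbf{p}[i+1]))\geq f_{K_q}(r(\mathbf{p}[i]))$, which is the stated monotonicity. The one step that needs delicate bookkeeping is the first equality (\ref{convergence_proof_line_1}): because the $\mathbf{x}[i+1]$ appearing there is the iterate computed from $\mathbf{p}[i]$ rather than from the freshly updated $\mathbf{p}[i+1]$, strict equality does not hold in general. I expect this to be the main obstacle, but it can be handled by replacing (\ref{convergence_proof_line_1}) with the weaker inequality $f_{K_q}(r(\mathbf{p}[i+1]))\geq f_{K_q}(\tilde{r}_1(x_1[i+1],\mathbf{p}[i+1]),\ldots,\tilde{r}_K(x_K[i+1],\mathbf{p}[i+1]))$, which still follows from the pointwise bound $\tilde{r}_k\leq r_k$ and the monotonicity of $f_{K_q}$, and which preserves the chain in the direction required for the non-decrease conclusion.
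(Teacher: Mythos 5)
Your overall route is the same as the paper's: the alternating-maximization sandwich in (\ref{convergence_proof_line_1})--(\ref{convergence_proof_line_4}), underpinned by the pointwise bound $\tilde{r}_k(x_k,\mathbf{p})\leq r_k(\mathbf{p})$ (tight exactly at the update (\ref{optimal_quadratic_variables})), the monotonicity of $f_{K_q}$ from Property \ref{SLqP_SGqP_nondecreasing}, and the block-concavity of Remark \ref{quad_block_concave}. Your justifications of (\ref{convergence_proof_line_2}) and (\ref{convergence_proof_line_3}) are correct, and your diagnosis of (\ref{convergence_proof_line_1}) --- that $\mathbf{x}[i+1]$ is tangent at $\mathbf{p}[i]$ rather than at the freshly computed $\mathbf{p}[i+1]$, so the equality must be weakened to ``$\geq$'' via $\tilde{r}_k\leq r_k$ --- is exactly the right repair for that link.

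There is, however, a gap you did not close: line (\ref{convergence_proof_line_4}) suffers from the \emph{same} indexing mismatch, and there the generic relaxation points the wrong way. Under the convention you adopt ($\mathbf{x}[i+1]$ computed from $\mathbf{p}[i]$), the iterate $\mathbf{x}[i]$ was computed from $\mathbf{p}[i-1]$, so the claimed equality $f_{K_{q}}\bigl(\tilde{r}_{1}(x_{1}[i],\mathbf{p}[i]),\ldots\bigr)=f_{K_{q}}\bigl(r_{1}(\mathbf{p}[i]),\ldots\bigr)$ does not hold in general; the only bound available from $\tilde{r}_k\leq r_k$ is ``$\leq$'', which cannot be concatenated with the preceding ``$\geq$'' signs to deliver the stated monotonicity. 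The fix is already implicit in what you wrote: because $\mathbf{x}[i+1]$ is precisely the tangent choice for $\mathbf{p}[i]$, you have the exact equality $f_{K_{q}}\bigl(\tilde{r}_{1}(x_{1}[i+1],\mathbf{p}[i]),\ldots\bigr)=f_{K_{q}}\bigl(r_{1}(\mathbf{p}[i]),\ldots\bigr)$, so the chain should terminate at the right-hand side of (\ref{convergence_proof_line_2}); the further descent to $\mathbf{x}[i]$ in (\ref{convergence_proof_line_3})--(\ref{convergence_proof_line_4}) is both unnecessary and unjustifiable as written. With that three-link chain (minorization at $\mathbf{p}[i+1]$, maximization over $\mathbf{p}$, tangency at $\mathbf{p}[i]$) the proof is complete and, in fact, cleaner than the four-line version displayed in the paper.
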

\begin{proof}
The non-decreasing nature of the algorithm can be understood by considering the chain of reasoning in (\ref{convergence_proof_line_1})--(\ref{convergence_proof_line_4}), where $\mathbf{p}\left[i\right]$ and $\mathbf{x}\left[i\right]$ indicate the values of these variables in the $i^\mathrm{th}$ iteration. Specifically, (\ref{convergence_proof_line_1}) follows from Theorem \ref{quadratic_SPR_equivalence}; (\ref{convergence_proof_line_2}) follows from the fact that updating $\mathbf{p}^{}$ when $\mathbf{x}^{}$ is fixed maximizes the auxiliary objective in (\ref{SPR_obj_quad}); (\ref{convergence_proof_line_3}) follows from similar reasoning applied to the update of $\mathbf{x}^{}$ when $\mathbf{p}^{}$ is fixed, and (\ref{convergence_proof_line_4}) follows from Theorem \ref{quadratic_SPR_equivalence}. Further, we note that since the SLqP rate objective is upper bounded, the algorithm must converge.
\end{proof}

\begin{theorem}\label{QFT_convergence}
Algorithm \ref{QFTAlg} is a cyclic minorization-maximization (MM) algorithm and is guaranteed to converge to a directional stationary point of Problem (\ref{SPR_problem_shortterm}).
\end{theorem}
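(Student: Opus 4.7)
The plan is to establish Algorithm \ref{QFTAlg} as a cyclic minorization-maximization (MM) scheme and then invoke the standard MM convergence theorem adapted to the nonsmooth SLqP objective. Given the current iterate $\mathbf{p}[i]$, let $\mathbf{x}[i+1]$ be the closed-form update in (\ref{optimal_quadratic_variables}) and define the surrogate
\[
g(\mathbf{p}\,|\,\mathbf{p}[i]) := f_{K_{q}}\!\left(\tilde{r}_{1}(x_{1}[i+1],\mathbf{p}),\ldots,\tilde{r}_{K}(x_{K}[i+1],\mathbf{p})\right).
\]
The first step is to verify that $g(\cdot\,|\,\mathbf{p}[i])$ is a valid global minorizer of $f(\mathbf{p}) := f_{K_{q}}(r_{1}(\mathbf{p}),\ldots,r_{K}(\mathbf{p}))$. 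Lemma \ref{quadratic_transform} gives $2x_k\sqrt{A_k(\mathbf{p})}-x_k^{2}B_k(\mathbf{p}) \le A_k(\mathbf{p})/B_k(\mathbf{p})$ for every $\mathbf{p}$ and every $x_k$, with equality precisely when $x_k=\sqrt{A_k(\mathbf{p})}/B_k(\mathbf{p})$. Monotonicity of $\log$ combined with the componentwise monotonicity of $f_{K_q}$ (Property \ref{SLqP_SGqP_nondecreasing}) then yields $g(\mathbf{p}\,|\,\mathbf{p}[i]) \le f(\mathbf{p})$ globally, and the specific choice $\mathbf{x}[i+1]$ makes the bound tight at $\mathbf{p} = \mathbf{p}[i]$, i.e., $g(\mathbf{p}[i]\,|\,\mathbf{p}[i]) = f(\mathbf{p}[i])$. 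Since Step 4 of the algorithm maximizes $g(\cdot\,|\,\mathbf{p}[i])$ exactly over the compact box $[0,P_{\max}]^{K}$ (a tractable concave maximization by the reasoning in Remark \ref{quad_block_concave}), the iteration is a bona fide cyclic MM scheme, and the monotonicity already recorded in Theorem \ref{QFT_nondecreasing} is an immediate consequence.

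To upgrade monotone convergence to convergence to a directional stationary point of the nonsmooth Problem (\ref{SPR_problem_shortterm}), the essential step is a directional-derivative matching at the touching point:
\[
f'(\mathbf{p}[i];\mathbf{d}) \;=\; g'\bigl(\mathbf{p}[i]\,|\,\mathbf{p}[i];\mathbf{d}\bigr)\quad \text{for every feasible direction } \mathbf{d}.
\]
I would deliver this in two ingredients. First, a direct calculation (equivalently, the envelope theorem applied to $A_k/B_k = \max_{x_k}\{2x_k\sqrt{A_k(\mathbf{p})}-x_k^{2}B_k(\mathbf{p})\}$) shows that at $x_k = x_k[i+1]$ the smooth map $\mathbf{p}\mapsto\tilde{r}_k(x_k[i+1],\mathbf{p})$ and the smooth rate $\mathbf{p}\mapsto r_k(\mathbf{p})$ agree in both value and gradient at $\mathbf{p}[i]$. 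Second, since $f_{K_{q}}$ is concave and therefore locally Lipschitz and directionally differentiable, a nonsmooth chain rule lifts this componentwise tangency to equality of directional derivatives of the two compositions in every direction $\mathbf{d}$.

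The final step is standard. Compactness of $[0,P_{\max}]^{K}$ implies that the iterate sequence admits at least one limit point $\mathbf{p}^{\star}$. Continuity of the closed-form $\mathbf{x}$-update and of the surrogate, combined with monotone convergence of the objective values, yields $\mathbf{p}^{\star} \in \arg\max_{\mathbf{p}\in[0,P_{\max}]^{K}} g(\mathbf{p}\,|\,\mathbf{p}^{\star})$. Concavity of $g(\cdot\,|\,\mathbf{p}^{\star})$ then gives $g'(\mathbf{p}^{\star}\,|\,\mathbf{p}^{\star};\mathbf{d}) \le 0$ for every feasible direction $\mathbf{d}$, and the directional-derivative matching transfers this inequality to $f'(\mathbf{p}^{\star};\mathbf{d}) \le 0$, proving that $\mathbf{p}^{\star}$ is a directional stationary point of Problem (\ref{SPR_problem_shortterm}).

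I expect the main obstacle to be the directional-derivative matching: the nonsmoothness of $f_{K_{q}}$ (Property \ref{nonsmoothness_SGqP_SLqP}) blocks a direct appeal to classical smooth MM theory, so one must either invoke a nonsmooth chain rule or supply an explicit subdifferential calculation exploiting the active-set representation of $f_{K_q}$ in (\ref{SGqP_pointwise_maximum}). Everything else—minorization, per-block concavity, and the limit-point argument—is routine once this tangency is in hand.
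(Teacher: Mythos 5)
Your proposal is correct and follows essentially the same route as the paper's proof: both cast Algorithm~\ref{QFTAlg} as an MM scheme whose QFT auxiliary objective is a tangent minorant of the SLqP objective, and then verify the standard non-smooth MM convergence conditions (compact superlevel sets, matching of directional derivatives at the touching point, and continuity of the surrogate). The only substantive difference is that where you make the directional-derivative matching explicit via an envelope-theorem argument for the componentwise tangency plus a nonsmooth chain rule through $f_{K_q}$, the paper disposes of that step by citing the tangent-minorant property from the MM literature.
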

The proof is detailed in Appendix \ref{MM_proof}. We remark that this convergence result is different from those in the prior FP/WMMSE works \cite{shen_fractional_2018,shen_fractional_2018-1,shi_iteratively_2011} which primarily focus on reaching stationary points of weighted sum-rate problems. For the WSR (and by extension the special sum-rate case), it is sufficient to show that the objective function is non-decreasing after each iteration of the algorithms. In contrast, the SLqP rate objective function is non-smooth in general, which means that we instead have to prove convergence to a \textit{directional} stationary point, which is considerably more challenging.
\vspace{0.5em}
\hrule
\vspace{0.50em}
\subsubsection{\textbf{Logarithmic Fractional Transform}}

The quadratic transform is well-established in the literature as a means of tackling sum-of-ratios optimization problems \cite{shen_fractional_2018-1,khan_optimizing_2018}. In principle, however, \textit{any} similar transform that converts the rate expression into block-concave form would allow us to optimize SLqP utility in an identical fashion. Hence, the quadratic fractional transform is sufficient, but \textit{not necessary} to solve percentile programs. Accordingly, in this section, we introduce a novel transform, called the logarithmic fractional transform (LFT), which similarly allows us to tackle percentile rate programs by converting them into block-concave forms. While similar in spirit to the QFT, the resultant approach is algorithmically distinct, and thus generally converges to distinct directional stationary points.

\begin{theorem}\label{logarithmic_transform}
	Let $A\left(\mathbf{p}\right):\mathbb{R}^{K}\mapsto\mathbb{R}_{+}$, $B\left(\mathbf{p}\right):\mathbb{R}^{K}\mapsto\mathbb{R}_{++}$ and $\mathcal{P\subseteq}\mathbb{R}^{K}$. Then the fractional optimization problem
	\begin{subequations}
		\begin{align}
		\underset{\mathbf{p^{\mathit{}}}}{\mathrm{maximize}}\quad\mathrm{log}\left(1+\frac{A\left(\mathbf{p}\right)}{B\left(\mathbf{p}\right)}\right)\hspace{0.5em}\label{logarithmic_transform_orig_obj}\\
		\mathrm{subject\,to}\quad\mathbf{p}\in\mathcal{P}\hspace{4.9em}
		\end{align}
	\end{subequations}
	is equivalent to the following auxiliary optimization problem
	\begin{subequations}
		\begin{align}
\underset{x,\mathbf{p^{\mathit{}}}}{\mathrm{maximize}}\quad-xB\left(\mathbf{p}\right)+\mathrm{log}\left(x\left(A\left(\mathbf{p}\right)+B\left(\mathbf{p}\right)\right)\right)+1\hspace{0em}\label{logarithmic_transform_aux_obj}\\
		\mathrm{subject\,to}\quad\mathbf{p}\in\mathcal{P}\hspace{14.10em}
		\end{align}
	\end{subequations}
	where the equivalence is in the sense of both the optimal variables and objective function value.
\end{theorem}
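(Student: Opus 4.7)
The plan is to mirror the proof of Lemma \ref{quadratic_transform} for the quadratic transform: fix the power vector $\mathbf{p}$, show that the auxiliary objective is concave in the scalar auxiliary variable $x$, optimize out $x$ via the first-order condition, and verify that back-substitution recovers the original logarithmic objective. This establishes equivalence both in the optimal objective value and in the optimal $\mathbf{p}$.

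First I would note that for fixed $\mathbf{p} \in \mathcal{P}$, the auxiliary objective $\phi(x) := -xB(\mathbf{p}) + \log\bigl(x(A(\mathbf{p})+B(\mathbf{p}))\bigr) + 1$ can be written as $-xB(\mathbf{p}) + \log x + \log(A(\mathbf{p})+B(\mathbf{p})) + 1$. Since $B(\mathbf{p}) > 0$, the term $-xB(\mathbf{p})$ is linear and $\log x$ is strictly concave on $\mathbb{R}_{++}$, so $\phi$ is strictly concave in $x$ on its domain. Consequently any stationary point is the unique global maximizer.

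Next I would take the first-order condition
\[
\frac{\partial \phi}{\partial x} = -B(\mathbf{p}) + \frac{1}{x} = 0,
\]
which yields the closed-form optimizer $x^{\star}(\mathbf{p}) = 1/B(\mathbf{p})$, well-defined because $B(\mathbf{p}) \in \mathbb{R}_{++}$. Substituting $x^{\star}$ back into $\phi$ gives
\[
\phi(x^{\star}) = -1 + \log\!\left(\frac{A(\mathbf{p}) + B(\mathbf{p})}{B(\mathbf{p})}\right) + 1 = \log\!\left(1 + \frac{A(\mathbf{p})}{B(\mathbf{p})}\right),
\]
which is exactly the original objective in (\ref{logarithmic_transform_orig_obj}).

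Finally, I would argue the equivalence of the joint problem. Since the auxiliary problem maximizes over $(x,\mathbf{p})$, one can decompose the maximization as $\max_{\mathbf{p} \in \mathcal{P}} \max_{x} \phi(x;\mathbf{p})$; the inner maximization is attained uniquely at $x^\star(\mathbf{p}) = 1/B(\mathbf{p})$ and produces $\log(1 + A(\mathbf{p})/B(\mathbf{p}))$, so the outer problem in $\mathbf{p}$ coincides exactly with (\ref{logarithmic_transform_orig_obj}). Any $\mathbf{p}^{\star}$ solving the original problem, paired with $x^{\star} = 1/B(\mathbf{p}^{\star})$, solves the auxiliary problem with the same objective value, and conversely. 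I do not anticipate a hard step: the strict concavity in $x$ makes the inner maximization trivial, and the only subtlety is keeping track of the domain assumption $B(\mathbf{p}) > 0$ needed for both the logarithm's argument and the division by $B(\mathbf{p})$ to be well-defined, which is already guaranteed by the hypothesis $B:\mathcal{P}\to\mathbb{R}_{++}$.
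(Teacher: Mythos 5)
Your proof is correct and follows essentially the same route as the paper: establish concavity of the auxiliary objective in $x$ for fixed $\mathbf{p}$, solve the first-order condition to get $x^{\star}=1/B(\mathbf{p})$, and back-substitute to recover the original objective, which yields equivalence of both optimal values and optimizers. The extra details you supply (explicit back-substitution and the $\max_{\mathbf{p}}\max_{x}$ decomposition) are consistent with, and slightly more complete than, the paper's own argument.
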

\begin{proof}
	The proof is similar to that for Theorem \ref{quadratic_transform}. The objective in (\ref{logarithmic_transform_aux_obj}) is concave in the auxiliary variable $x$; hence, setting the first order condition with respect to this variable, we obtain:
	\begin{multline}
\frac{\partial}{\partial x}\left\{ -xB\left(\mathbf{p}\right)+\mathrm{log}\left(x\left(A\left(\mathbf{p}\right)+B\left(\mathbf{p}\right)\right)\right)+1\right\} =0\\\Rightarrow x=\frac{1}{B\left(\mathbf{p}\right)}
	\end{multline}
	Substituting this value of $x$ back into the objective in (\ref{logarithmic_transform_aux_obj}) we obtain the original objective in (\ref{logarithmic_transform_orig_obj}).
\end{proof}

\begin{theorem}\label{logarithmic_SPR_equivalence}
	The optimization problem in (\ref{SPR_problem_shortterm}) is equivalent to the following auxiliary problem:
	\begin{subequations}\label{SPR_problem_log}
		\begin{align}
		\underset{\mathbf{x}^{},\mathbf{p^{\mathit{}}}}{\mathrm{maximize}}\quad f_{K_{q}}\left(\check{r}_{1}^{}\left(x_{1}^{},\mathbf{p}^{}\right),\ldots,\check{r}_{K}^{}\left(x_{K}^{},\mathbf{p}^{}\right)\right)\hspace{0.em}\label{SPR_obj_log}\\
		\mathrm{subject\,to}\quad0\leq p_{k}^{}\leq P_{\mathrm{max}},\,k=1,\ldots,K\label{SPR_constraint_shortterm_log}\hspace{1.0em}
		\end{align}
	\end{subequations}
	where
	\begin{multline}
	\check{r}_{k}^{}\left(x_{k}^{},\mathbf{p}^{}\right)=-x_{k}^{}B_{k}\left(\mathbf{p}^{}\right)+
	\mathrm{log}\left(x\left(A_{k}\left(\mathbf{p}^{}\right)+B_{k}\left(\mathbf{p}^{}\right)\right)\right)+1
	\end{multline}
	and $A_{k}\left(\mathbf{p}^{}\right)$ and $B_{k}\left(\mathbf{p}^{}\right)$ are the received signal and interference-plus-noise powers at the $k^{th}$ user given in (\ref{A_value}) and (\ref{B_value}) respectively.
	The equivalence between the two problems is in the sense of the optimal variables and objective function values.
\end{theorem}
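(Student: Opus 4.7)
The plan is to mirror the argument used for the quadratic fractional transform in Theorem \ref{quadratic_SPR_equivalence}, replacing Lemma \ref{quadratic_transform} with its logarithmic analogue in Theorem \ref{logarithmic_transform}. Since each rate $r_k(\mathbf{p})$ in (\ref{rates_expression}) has the form $\log(1 + A_k(\mathbf{p})/B_k(\mathbf{p}))$ with $A_k$ and $B_k$ as in (\ref{A_value})--(\ref{B_value}), Theorem \ref{logarithmic_transform} applies term-by-term and produces exactly the surrogate rate expression $\check{r}_k(x_k,\mathbf{p})$ appearing in the theorem statement. The overall goal is to show that (i) for every feasible $\mathbf{p}$ the maximum of (\ref{SPR_obj_log}) over $\mathbf{x}$ equals the original objective value at $\mathbf{p}$, and (ii) therefore the two problems share the same set of optimal power vectors and the same optimal value.

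The key steps I would carry out in order are as follows. First, fix an arbitrary feasible $\mathbf{p}$ and consider the inner maximization $\max_{\mathbf{x}} f_{K_q}(\check{r}_1(x_1,\mathbf{p}),\ldots,\check{r}_K(x_K,\mathbf{p}))$. Observe that each auxiliary variable $x_k$ only appears inside the single coordinate $\check{r}_k$, so the coupling between the $x_k$'s is solely through the SLqP aggregator. Second, appeal to Theorem \ref{logarithmic_transform} applied to the $k$-th fraction: it yields
\begin{equation*}
\max_{x_k \in \mathbb{R}_{++}} \check{r}_k(x_k,\mathbf{p}) \;=\; r_k(\mathbf{p}), \qquad x_k^\star = \frac{1}{B_k(\mathbf{p})},
\end{equation*}
and for any other $x_k$ we have $\check{r}_k(x_k,\mathbf{p}) \le r_k(\mathbf{p})$. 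Third, invoke Property \ref{SLqP_SGqP_nondecreasing}, which states that $f_{K_q}$ is non-decreasing in each of its arguments: this monotonicity guarantees that the coordinatewise-optimal choice $x_k = 1/B_k(\mathbf{p})$ for every $k$ is simultaneously a maximizer of the aggregated objective, hence
\begin{equation*}
\max_{\mathbf{x}} f_{K_q}\!\left(\check{r}_1(x_1,\mathbf{p}),\ldots,\check{r}_K(x_K,\mathbf{p})\right) = f_{K_q}\!\left(r_1(\mathbf{p}),\ldots,r_K(\mathbf{p})\right).
\end{equation*}
Fourth, since the constraint set on $\mathbf{p}$ in (\ref{SPR_constraint_shortterm_log}) coincides with (\ref{SPR_constraint_shortterm}) and the inner maximum equals the original objective for every $\mathbf{p}$, maximizing over $\mathbf{p}$ on both sides gives equality of the optimal values, and any optimizer of (\ref{SPR_problem_shortterm}) lifts to an optimizer of (\ref{SPR_problem_log}) via the closed form $x_k^\star = 1/B_k(\mathbf{p}^\star)$, while the $\mathbf{p}$-component of any optimizer of (\ref{SPR_problem_log}) is optimal for (\ref{SPR_problem_shortterm}).

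The main obstacle is the step that bridges the coordinatewise pointwise equivalence supplied by Theorem \ref{logarithmic_transform} and the joint maximization over $\mathbf{x}$ inside the non-smooth, non-separable SLqP aggregator $f_{K_q}$. Unlike the weighted sum-rate setting, $f_{K_q}$ is not strictly concave and its dependence on its arguments is governed by an order statistic, so one cannot simply invoke uniqueness of the stationary point in $\mathbf{x}$. The resolution is precisely the non-decreasing property from Property \ref{SLqP_SGqP_nondecreasing}, which ensures that \emph{raising} each coordinate $\check{r}_k$ to its pointwise maximum can never decrease $f_{K_q}$; combined with the fact that each coordinate is maximized independently by its own $x_k$, this collapses the inner problem to the coordinatewise closed form. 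As a side remark paralleling Remark \ref{optimal_updates_uniqueness}, the optimizer in $\mathbf{x}$ is generally non-unique due to the non-strict concavity of $f_{K_q}$, but the specific choice $x_k^\star = 1/B_k(\mathbf{p})$ is what uniquely preserves the equivalence with (\ref{SPR_problem_shortterm}) and will subsequently enable a cyclic update scheme analogous to Algorithm \ref{QFTAlg}.
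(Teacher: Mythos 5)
Your proposal is correct and follows essentially the same route as the paper: the paper proves Theorem \ref{logarithmic_SPR_equivalence} by pointing to the argument for Theorem \ref{quadratic_SPR_equivalence}, which likewise applies the fractional transform term-by-term and then pulls the per-coordinate maximizations outside $f_{K_q}$ using its non-decreasing property (Property \ref{SLqP_SGqP_nondecreasing}). Your identification of the monotonicity of the SLqP aggregator as the step that bridges coordinatewise and joint maximization over $\mathbf{x}$ is exactly the key point of the paper's proof.
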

\begin{proof}
	The proof follows from a reasoning similar to Theorem \ref{quadratic_SPR_equivalence}; the details are omitted for brevity.
\end{proof}
Similar to the quadratic transform auxiliary objective, the logarithmic transform auxiliary objective in (\ref{SPR_problem_log}) is now in a form amenable to optimization. With the power variables $\mathbf{p}^{}$ fixed, an optimal choice for the auxiliary variables $\mathbf{x}^{}$ is given by
\begin{equation}\label{optimal_logarithmic_variables}
x_{k}^{}=\frac{1}{B_{k}\left(\mathbf{p}^{}\right)},\quad{k}=1,\ldots,K
\end{equation}
\begin{remark}\label{log_auxiliary_concave}
	The auxiliary objective function in (\ref{SPR_obj_log}) is concave in the power variables $\mathbf{p}^{}$ when the auxiliary variables $\mathbf{x}^{}$ are fixed.
\end{remark}
\begin{proof}
	The proof follows from similar reasoning to Remark \ref{quad_block_concave}.
\end{proof}

From the preceding remark, it follows that a similar block coordinate ascent strategy to Algorithm \ref{QFTAlg} can be utilized to derive an effective optimization strategy by alternately fixing the power variables $\mathbf{p}^{(n)}$ while updating $\mathbf{x}^{(n)}$, and vice versa. Combining these update steps together, we summarize the second proposed approach for SLqP rate optimization in Algorithm \ref{LFTAlg}.

\begin{algorithm}
	\caption{LFT Algorithm for SLqP Rate Maximization}
	\label{LFTAlg}
	\begin{algorithmic}[1]
		\State \textbf{initialize} $\mathbf{p}^{}$
		\For{$i=1,\ldots$}
		\State \textbf{update} $\mathbf{x}^{}$ using (\ref{optimal_logarithmic_variables}).
		\State \textbf{update} $\mathbf{p}^{}$ by solving (\ref{SPR_problem_log}) for fixed $\mathbf{x}^{}$.
		\EndFor
		\State \textbf{until} some convergence criterion is met.
	\end{algorithmic}
\end{algorithm}
\newpage
\begin{theorem} \label{LFT_convergence}
	Algorithm \ref{LFTAlg} is a cyclic MM algorithm and converges in a non-decreasing fashion to reach a directional stationary point of the original SLqP rate objective in (\ref{SPR_obj_shortterm}).
\end{theorem}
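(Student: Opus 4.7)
My plan is to mirror the two-step argument used for Algorithm \ref{QFTAlg} (Theorems \ref{QFT_nondecreasing} and \ref{QFT_convergence}), exploiting the block-concavity established in Remark \ref{log_auxiliary_concave} and the equivalence proved in Theorem \ref{logarithmic_SPR_equivalence}. The fact that we now work with a logarithmic rather than quadratic auxiliary makes essentially no structural difference, because what matters is only (i) that the auxiliary objective matches the original objective whenever $\mathbf{x}$ is chosen as in (\ref{optimal_logarithmic_variables}), and (ii) that the auxiliary objective is block-concave in each of $\mathbf{x}$ and $\mathbf{p}$ individually.

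First I would establish the non-decreasing property by writing the same four-line chain used for the QFT case, with $\check{r}_k$ replacing $\tilde{r}_k$. Specifically, denoting iterates by $\mathbf{p}[i]$ and $\mathbf{x}[i]$, the original objective at $\mathbf{p}[i+1]$ equals the auxiliary objective evaluated at $(\mathbf{x}[i+1],\mathbf{p}[i+1])$ by Theorem \ref{logarithmic_SPR_equivalence} since step 3 of iteration $i+1$ sets $x_k[i+1]=1/B_k(\mathbf{p}[i+1])$; this in turn dominates the auxiliary value at $(\mathbf{x}[i+1],\mathbf{p}[i])$ because the $\mathbf{p}$-update solves the concave subproblem in (\ref{SPR_problem_log}) to optimality (Remark \ref{log_auxiliary_concave}); which in turn dominates the value at $(\mathbf{x}[i],\mathbf{p}[i])$ because $\mathbf{x}[i+1]$ is the maximizer of the auxiliary in the $\mathbf{x}$-block for fixed $\mathbf{p}[i]$ (Theorem \ref{logarithmic_transform}); and the latter equals the original objective at $\mathbf{p}[i]$ by Theorem \ref{logarithmic_SPR_equivalence} again. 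Boundedness of the SLqP rate on the compact feasible set then gives convergence of the objective values.

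Next I would identify the procedure as a cyclic minorization-maximization scheme. Define
\begin{equation*}
u(\mathbf{p};\mathbf{p}')=f_{K_q}\bigl(\check r_1(1/B_1(\mathbf{p}'),\mathbf{p}),\ldots,\check r_K(1/B_K(\mathbf{p}'),\mathbf{p})\bigr).
\end{equation*}
Using Theorem \ref{logarithmic_transform} applied componentwise, one shows that for each $k$ the function $\check r_k(1/B_k(\mathbf{p}'),\mathbf{p})\le r_k(\mathbf{p})$ for every feasible $\mathbf{p}$, with equality at $\mathbf{p}=\mathbf{p}'$; since $f_{K_q}$ is non-decreasing in each argument (Property \ref{SLqP_SGqP_nondecreasing}) and concave, the composite $u(\mathbf{p};\mathbf{p}')$ minorizes the original objective $f_{K_q}(r_1(\mathbf{p}),\ldots,r_K(\mathbf{p}))$ and touches it tangentially at $\mathbf{p}=\mathbf{p}'$. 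The iteration $\mathbf{p}[i+1]\in\arg\max_{\mathbf{p}}u(\mathbf{p};\mathbf{p}[i])$ is therefore a bona fide MM step, and (combined with Remark \ref{log_auxiliary_concave}) the inner maximization is a convex program.

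Finally, to upgrade monotone convergence to directional stationarity I would invoke the same machinery used in Appendix \ref{MM_proof}: because $f_{K_q}$ is non-smooth but is the pointwise minimum of finitely many smooth concave functions (Property \ref{convexity_concavity}), one-sided directional derivatives exist in every direction, and the tangency-plus-minorization property of $u(\cdot;\mathbf{p}')$ forces the directional derivatives of $u(\cdot;\mathbf{p}')$ and of the true objective to coincide at $\mathbf{p}'$. Any limit point $\mathbf{p}^\star$ of the bounded sequence $\{\mathbf{p}[i]\}$ therefore satisfies the first-order necessary condition for the minorized subproblem, which by the directional-derivative matching translates into the directional stationarity of the original SLqP rate program. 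The main obstacle I anticipate is carefully verifying the componentwise minorization inequality $\check r_k(1/B_k(\mathbf{p}'),\mathbf{p})\le r_k(\mathbf{p})$ globally in $\mathbf{p}$ (not merely the concavity-in-$\mathbf{p}$ noted in Remark \ref{log_auxiliary_concave}), since unlike the QFT case the logarithm appears both inside the auxiliary and inside the true rate, so the inequality must be extracted from the underlying concavity of $\log(\cdot)$ rather than from a completed-square identity.
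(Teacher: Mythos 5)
Your proposal is correct and follows essentially the same route as the paper, which proves this theorem by transplanting the QFT argument (the four-line monotonicity chain of Theorem \ref{QFT_nondecreasing} plus the tangent-minorant MM framework of Appendix \ref{MM_proof}) to the logarithmic auxiliary. The one step you flag as a potential obstacle---the global componentwise inequality $\check r_k(1/B_k(\mathbf{p}'),\mathbf{p})\le r_k(\mathbf{p})$---does go through (it reduces to $\log t \le t-1$ with $t=B_k(\mathbf{p})/B_k(\mathbf{p}')$, or equivalently to the concavity of the auxiliary in $x$ with maximizer $x=1/B_k(\mathbf{p})$ from Theorem \ref{logarithmic_transform}), and is in fact the detail the paper leaves implicit by omitting the proof "for brevity."
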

\begin{proof}
	The proof follows from similar reasoning to Theorems \ref{QFT_nondecreasing} and \ref{QFT_convergence}; the details are omitted for brevity.
\end{proof}

\subsection{Numerical Results}\label{intf_results}
{\color{black}
To evaluate the performance of the proposed algorithms, we simulate a network with $7$ hexagonal wrapped-around cells and BSs located at the center of each cell. Each cell contains $8$ users; thus, the total number of users in the network is $K=56$. The distance between adjacent BSs is set as $2000\hspace{0.1em}\mathrm{m}$, and we assume Rayleigh fading with a block-fading model. The path loss between a user and BS separated by a distance of $d$ meters is given by $\left(1+{d}/{d_{0}}\right)^{-\zeta/2}$ where $d_0=0.3920$ is a model-dependent reference distance, and $\zeta=3.76$ is the pathloss exponent. The noise power spectral density (PSD) is $-143\hspace{0.1em}\mathrm{dBm/Hz}$; we assume a system bandwidth of $20\hspace{0.1em}\mathrm{MHz}$ and maximum per-user transmit power constraint of $43\hspace{0.1em}\mathrm{dBm}$. The path loss exponent and reference distance that we utilize are based on the COST231 model which simulates LTE channels in real-world conditions. As such, these parameters have been identically utilized in numerous prior works including, but not limited to, \cite{shen_fractional_2018,shen_fractional_2018-1,khan_optimizing_2020,hosseini_optimizing_2018}.

We begin by considering the performance of the proposed approach for the SLqP rate maximization problem for $K_q=14$, i.e., the $25^\mathrm{th}$ percentile throughput for a single channel realization with the user and BS locations illustrated in Figure \ref{user_locations}. For these and subsequent results, we initialize each user's transmission power by sampling uniformly randomly in the interval $\left[0,P_\mathrm{max}\right]$. We observe from the results in Figure \ref{K8_Kq14_st} that for both the quadratic fractional transform algorithm and logarithmic fractional transform algorithm, the auxiliary objective function values are equivalent after each iteration as per Theorems \ref{quadratic_SPR_equivalence} and \ref{logarithmic_SPR_equivalence}. Furthermore, as predicted by Theorems \ref{QFT_convergence} and \ref{LFT_convergence}, both algorithms converge in a non-decreasing fashion (starting from an identical random initialization) to a stationary point.
\begin{figure}[]
	\begin{center} 
		\includegraphics[trim={0cm 0cm 0cm 0cm},clip,width=0.4\textwidth]{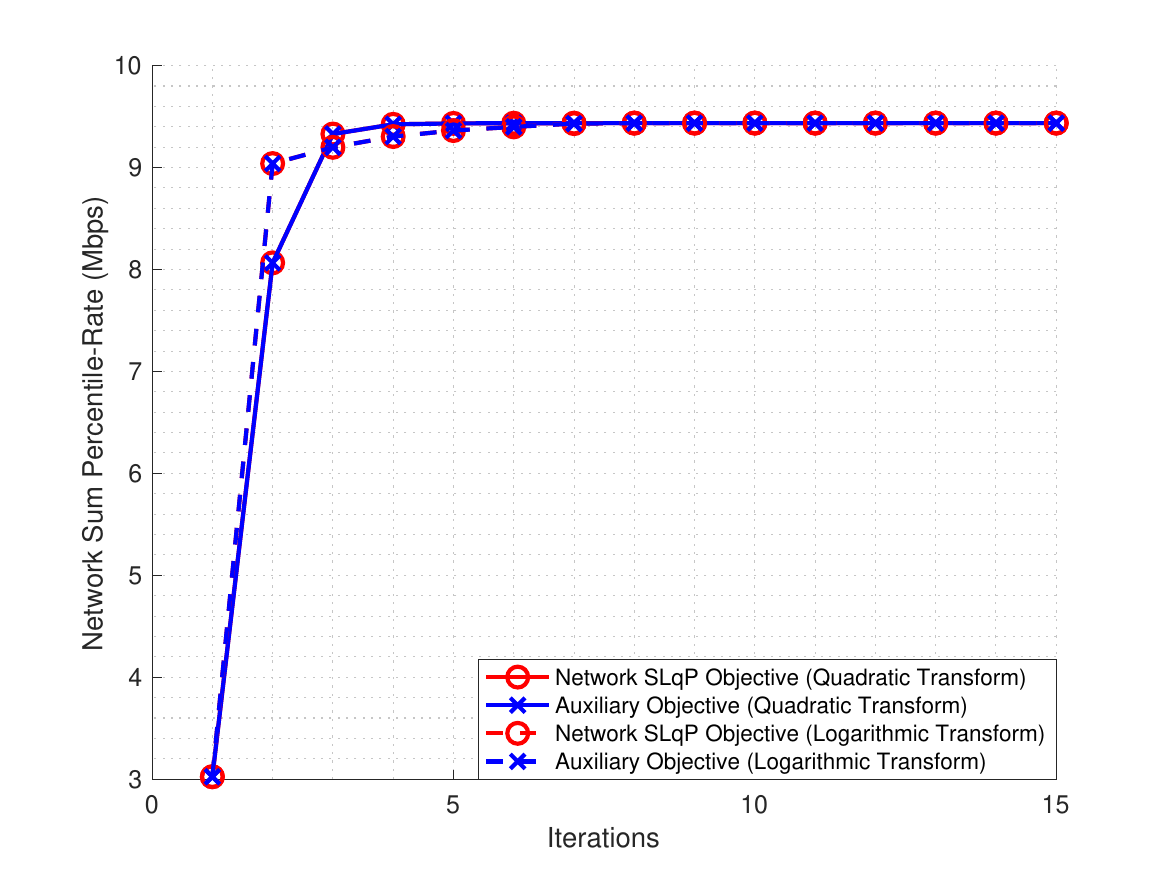}
		\caption{{\color{black}Convergence of proposed algorithms for $25^\mathrm{th}$-percentile SLqP rate maximization problem, $K=56$, $K_q=14$.}}
		\centering
		\label{K8_Kq14_st}
	\end{center}
\end{figure} 

\begin{figure}[]
	\begin{center} 
		\includegraphics[trim={3cm 0cm 3cm 0cm},clip,width=0.4\textwidth]{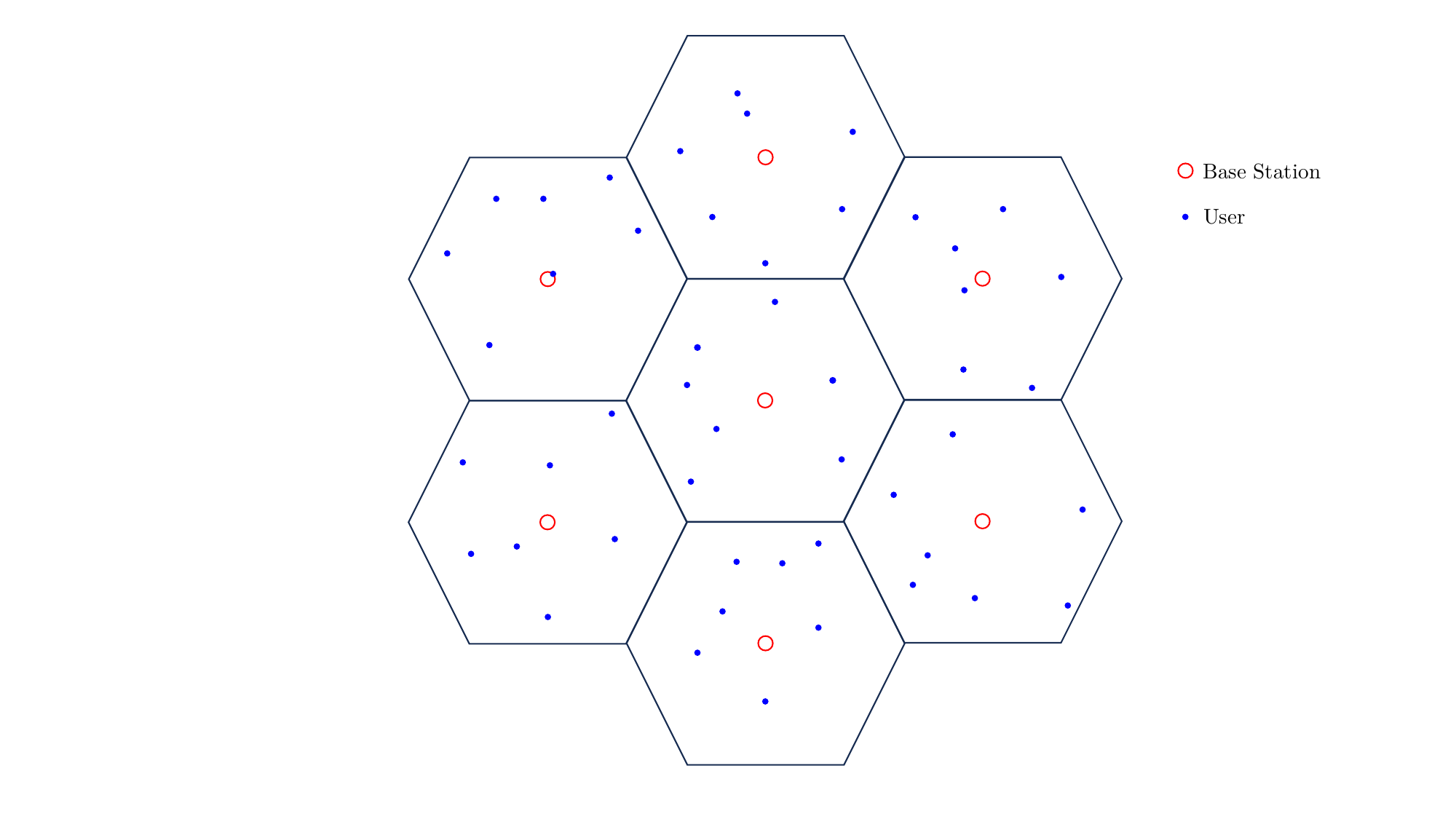}
		\caption{User and BS locations for a single realization in a network with $K=70$ users.}
		\centering
		\label{user_locations}
	\end{center}
\end{figure} 

Additionally, we compare the performance of the proposed QFT and LFT algorithms against several well-known prior algorithms from the resource allocation literature. Specifically, we compare against the following four algorithms:
\begin{itemize}
	\item Successive Convex Approximation (SCA): this approach has widely been used in a number of signal processing problems and resource management algorithms \cite{shen_multi-uav_2020,ibrahim_fast_2020,dong_multi-group_2020}; we use the variant in \cite{scutari_parallel_2017,scutari_parallel_20171}.
	\item Sequential Quadratic Programming (SQP): This has been widely utilized as a benchmark in prior resource management algorithms \cite{khan_optimizing_2020}. For our implementation, we utilize this method using the standard \verb#fmincon# solver in MATLAB.
	\item Subgradient Ascent (SGA): The subgradient ascent method is a heuristic for optimizing non-smooth problems, and is an analogue of the well-known gradient ascent method \cite{nesterov_subgradient_2014}. Specifically, for problems in which the objective function is non-differentiable, this method repeatedly computes and then takes steps in the direction of a subgradient of the objective function. Compared to the aforementioned SCA and SQP methods, this approach has the unique benefit of running much faster, as we do not introduce a combinatorial number of constraints.
	\item Channel Weighted Sum-Rate (CWSR): We also compare against a heuristic of weighted sum-rate in which the weights are set to the inverse of the channel strength for each user. This favours the cell-edge users by weighting them higher as compared to the cell-centre users.
\end{itemize}

Figure \ref{convergence_benchmarks} shows the convergence for a randomly-generated set of identical channel realizations for $K=14$ and $K_q=7$. We observe that the QFT and LFT algorithms converge quickly, and in a non-decreasing fashion, to a directional stationary point within as little as 6 iterations and achieve the highest objective function value. Among the competitors, the SCA approach performs the best and converges in a non-decreasing fashion. On the other hand, the CWSR approach increases initially but then decreases after the second iteration. We note that this heuristic optimizes a different, proxy utility function (i.e., WSR) as compared to the true objective function (i.e., SLqP) rate; hence, this result is unsurprising.
\begin{figure}[t!]
	\begin{center} 
		\includegraphics[trim={0cm 0cm 0cm 0cm},clip, width=0.4\textwidth]{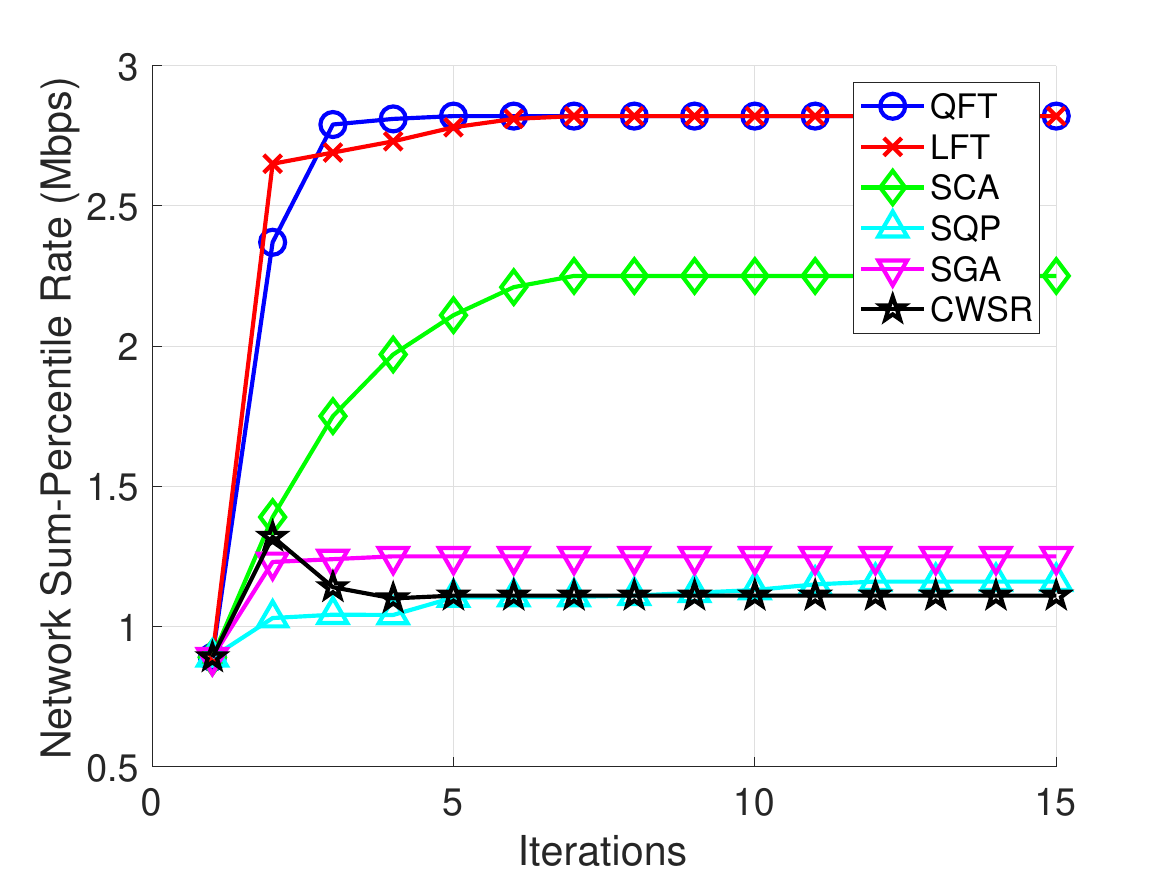}
		\caption{Convergence of SLqP utility for $K=14$ and $K_q=7$.}
		\centering
		\label{convergence_benchmarks}
	\end{center}
\end{figure}

We also considered another set of parameters, with $K=21$ and $K_q=3$ and averaged across a large set of random channel realizations. The results for this case are illustrated in Table \ref{execution_times}. It can be clearly seen that the QFT and LFT algorithms achieve the best performance, followed by the SCA and SGA algorithms. The SCA algorithm has the best performance once again among the competitors but the proposed QFT and LFT algorithms still achieve about 30\% higher objective function values.

As mentioned earlier, an additional critical point that cannot be ignored is the computation time. Since the prior optimization algorithms smooth the problem prior to solving it, the computation time needed to deal with the combinatorial number of non-convex constraints is excessive. To measure the execution times, we ran all algorithms on a machine with a 12-core AMD Ryzen 9 5900X CPU and 128 GB of memory; these are also given in Table \ref{execution_times}. We observe that the proposed QFT and LFT schemes converge to a solution {\textit{{more than 1,000 times faster}}} than the SCA and SQP algorithms. This is to be expected, since the smoothed variant of the problem required by these algorithms will have $_{21}C_{3}=1,330$ non-convex constraints, which will lead to a longer computational time. On the other hand, while the CWSR and SGA approaches execute much faster compared to CVX, they achieve low objective function values. Consequently, we affirm that our QFT and LFT schemes demonstrate superior speed and performance compared to the benchmarks.

We note that it is possible to write custom solvers (e.g. based on subgradient methods \cite{nesterov_subgradient_2014}) which may be able to solve the problem more efficiently than calling CVX. It is also not clear if closed-form algorithms can be found to solve the general SLqP-rate maximization problem; nonetheless, this would be an extremely interesting direction for future research.

\begin{table}[t!]
	\begin{center} 
		\captionof{table}{Execution times and SLqP utility for different resource algorithms.}\label{execution_times}
		\begin{tabular}{|c|c|c|}
			\hline
			Algorithm & SLqP Utility & Execution Time (seconds) \\ \hline
			QFT       & 1.49 & 13                  \\ \hline
			LFT       & 1.47 & 13                   \\ \hline
			SCA  & 1.20 & 18,953                   \\ \hline			
			SQP  & 0.81 & 21,559                 \\ \hline
			SGA  &  0.93 & 24                    \\ \hline
			CWSR & 0.86 & 3                    \\ \hline
		\end{tabular}
		\centering

	\end{center} 
\end{table}

To further evaluate the performance, we next consider SLqP rate optimization for the scenario when there are $K=70$ users in the network for $K_q=7$ (corresponding to $q=10$, and shown in Figure \ref{power}) and $K_q=4$ (corresponding to $q=5.7$, and shown in Figure \ref{SLqP_}). We ran the QFT and LFT algorithms for 1000 identically chosen random channel and user location initializations and plotted the performance as compared to randomized power control, sum-rate optimized power control (using the WMMSE-SR algorithm \cite{shi_iteratively_2011}) and PF control (using the WMMSE-PF algorithm as in \cite{shi_iteratively_2011}) for $P_\mathrm{max}$ values ranging between 10 dBm and 50 dBm. The PF power control scheme aims to maximize the sum of the logarithm of the achieved rates across the network; for a single time slot, this is a convex optimization problem \cite{zhi-quan_luo_dynamic_2008} and hence the WMMSE-PF algorithm converges to the optimum. 

As we observe in Figure \ref{power}, the sum-rate optimized power control achieves nearly zero cell-edge SLqP rate across the entire range of transmission powers; as mentioned earlier, this is to be expected since this optimization is unfair to cell-edge users. Random power control is better than sum-rate optimized power control while PF power control improves on both. Nevertheless, the proposed QFT and LFT algorithms comprehensively outperform even the PF power control scheme, achieving almost \textit{double} the cell-edge throughput compared to the WMMSE-PF power control scheme at the typical LTE power level of 43 dBm. At a transmission power of 50 dBm, the gap between the proposed approaches and the PF benchmark is even larger. Notably, the QFT algorithm marginally outperforms the LFT algorithm across the range of transmit powers. A similar trend holds for the $5.7^\mathrm{th}$-percentile SLqP utility in Figure \ref{SLqP_}.

\begin{figure}[]
	\begin{center} 
		\includegraphics[trim={0cm 0cm 0cm 0cm},clip,width=0.4\textwidth]{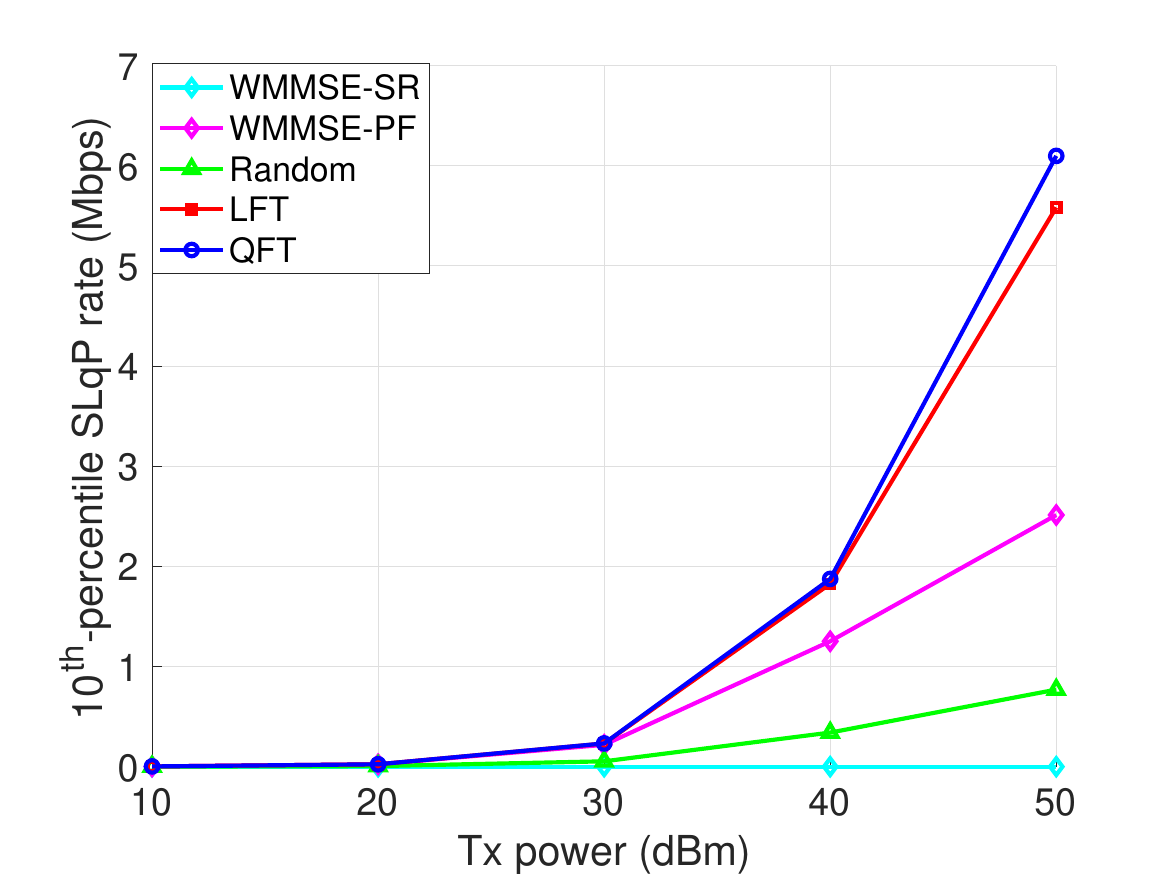}
		\caption{$10^\mathrm{th}$-percentile SLqP rate as a function of $P_\mathrm{max}$ for $K=70$, $K_q=7$.}
		\centering
		\label{power}
	\end{center}
\end{figure} 
\begin{figure}[h!]
	\begin{center} 
		\includegraphics[trim={0cm 0cm 0cm 0cm},clip, width=0.40\textwidth]{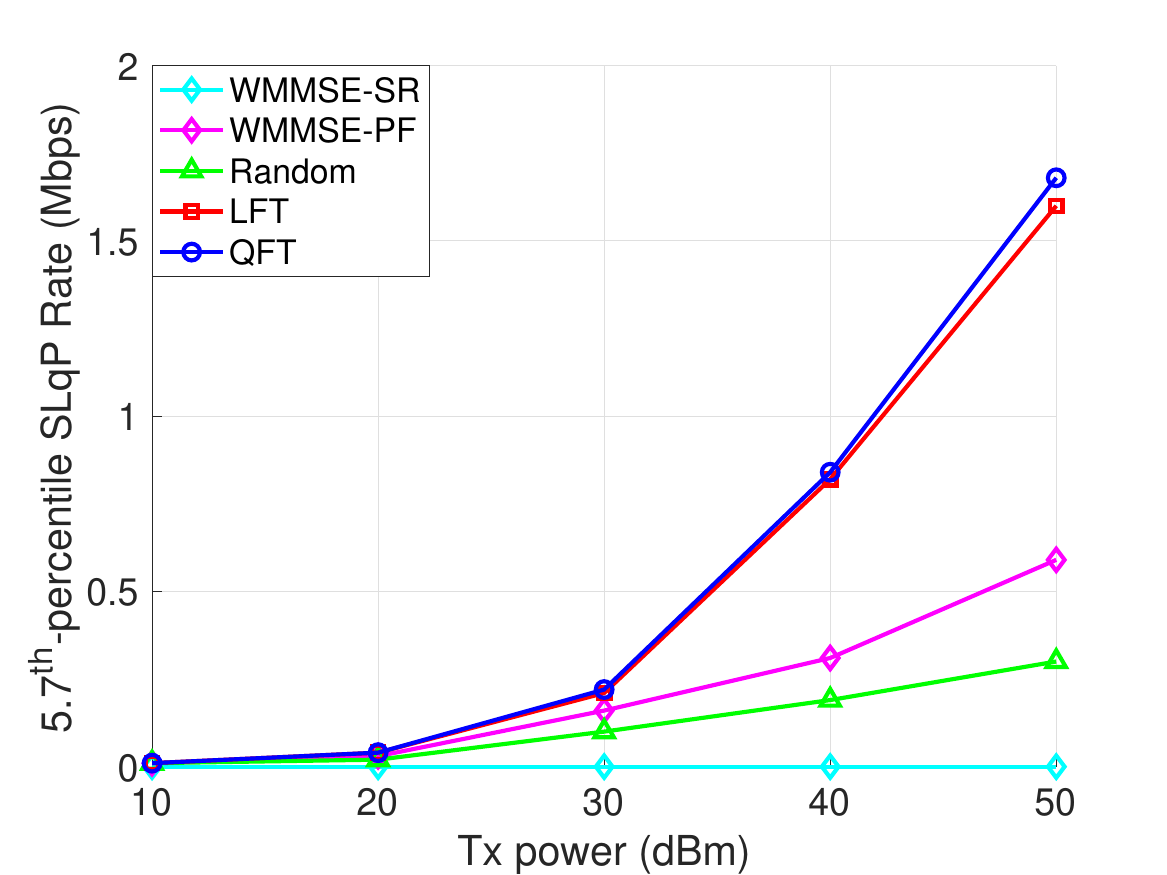}
		\caption{$5.7^\mathrm{th}$-percentile SLqP rate as a function of $P_\mathrm{max}$ for $K=70$, $K_q=4$.}
		\centering
		\label{SLqP_}
	\end{center}
\end{figure}

Finally, it is important to note that while both the QFT and LFT algorithms are minorization-maximization methods they are, in fact, \textit{completely distinct} from a mathematical perspective. This means that even when initialized from the same starting point, in general they do \textit{not} necessarily converge to the same directional stationary point. For example, different algorithms for weighted sum-rate maximization such as FP and WMMSE achieve different levels of performance in different wireless settings \cite{shen_fractional_2018,shen_fractional_2018-1}.

With this context, the closely matched performance of both algorithms is, in itself, a very interesting result. While the reasons for this are not clear, it seems to indicate that improving upon the performance of the QFT and LFT algorithms may require a radically different approach (perhaps one that is not necessarily based on block-concavity inducing transforms).

\section{Conclusions}\label{conclusions}
In this paper, we formulated explicit optimization problems to target sum-percentile throughput maximization through power control in interference-constrained wireless networks. The general class of SLqP throughput problems subsumes the well-known max-min and max-sum-rate optimization problems, and the choice of percentile allows us to flexibly control the inherent tradeoff between favouring cell-center and cell-edge users. With the exception of the two aforementioned extremes of the percentile problems, these problems are non-convex, non-smooth and NP-hard in general. We proposed two iterative algorithms that transform the original intractable problems into block-concave form, thereby enabling guaranteed convergence to stationary points.

Finally, we note that although we have focused on wireless cellular networks in this work, the concept of optimizing metrics at desired percentiles is applicable regardless of the scale, architecture and nature of a communications network.}

\vspace{-1.0em}

\ifCLASSOPTIONcaptionsoff
\newpage
\fi




\bibliographystyle{IEEEtran}
\bibliography{IEEEabrv.bib,biblio.bib}
\vspace{-1.00em}
\appendices
\section{Proof of Theorem \ref{NP_hard_theorem}}\label{NP_hardness_proof}


We now present a proof for the general NP-hardness of the short-term SLqP rate power allocation problem. We extend the technique utilized to prove the strong NP-hardness of the max-sum-rate problem in \cite{zhi-quan_luo_dynamic_2008} to all general SLqP rate power control problems for $q\geq\frac{200}{K}$.

We begin by noting that the rates for users in a single time slot can be expressed in the equivalent form:
\[
r_{k}\left(\mathbf{p}\right)=\mathrm{log}\left(1+\frac{p_{k}}{\sum_{j\neq k}p_{j}\eta_{j\rightarrow k}+\hat{\sigma}_{k}}\right)\]
where, for notational convenience, we have dropped the superscript denoting the current time slot, and 
\[
\eta_{j\rightarrow k}=\frac{\left|h_{j\rightarrow k}\right|^{2}}{\left|h_{k\rightarrow k}\right|^{2}}, \hat{\sigma}_{k}=\frac{\sigma_{k}}{\left|h_{k\rightarrow k}\right|^{2}}
\] 
The coefficient $\eta_{j\rightarrow k}$ can be thought of as the normalized interference channel from transmitter $j$ to receiver $k$; further, we assume $P_\mathrm{max}=1$.

Consider a $K$-node graph $G=(V,E)$, where $V=\left\{ v_{1},v_{2},\ldots,v_{K}\right\}$ denotes the set of vertices and $E$ denotes the set of edges connecting these vertices. Further, suppose that the graph consists of $K-K_q+1$ components defined as follows:
\begin{equation}
C_{i}=\left\{ \begin{array}{cc}
	\left\{ v_{i}\right\}  & i=1,\ldots,K-K_{q}\\
	\left\{ v_{K-K_{q}+1},v_{K-K_{q}+2},\ldots,v_{K}\right\}  & i=K-K_{q}+1
\end{array}\right.
\end{equation}

In other words, the first $K-K_q$ vertices are isolated, while the subgraph induced by the last $K_q$ vertices is connected; Figure \ref{NP_hardness_graph} illustrates an example for $K=10$ and $K_q=5$.

Now, for each $v\in{V}$, we define
\[
\eta_{j\rightarrow k}=\begin{cases}
LK_{q}^{2} & \mathrm{if}\hspace{0.3em}{v_{j}}\hspace{0.3em}\mathrm{is\hspace{0.3em}adjacent\hspace{0.3em}to}\hspace{0.3em}{v_{k}}\\
0 & \mathrm{otherwise}
\end{cases}
\]
where $L>K_q$; further, we set $P_\mathrm{max}=1$ and $\hat{\sigma}_k=L$. The vertices $v_,\ldots,v_{K-K_q}$ can be thought of as cell-center users as they are exempt from interference effects and are therefore capable of achieving the highest rates; on the other hand, users represented by vertices $v_{K-K_q+1},\ldots,v_{K}$ can be thought of as a cell-edge users as they both create interference to, and receive interference from, adjacent vertices within component $\mathcal{C}_{K-K_{q}+1}$. We claim that the SLqP rate problem has an optimal value $f_\mathrm{opt}=\left|I_{\mathcal{C}_{K-K_{q}+1}}\right|\mathrm{log}\left(1+\frac{1}{L}\right)$ if and only if the subgraph induced by the component $\mathcal{C}_{K-K_{q}+1}$ has a maximum independent set of size $\left|I_{\mathcal{C}_{K-K_{q}+1}}\right|$.
\begin{figure}[t!]
	\begin{center} 
		\includegraphics[trim={0.9cm 0cm 0cm 0cm},clip,width=0.4\textwidth]{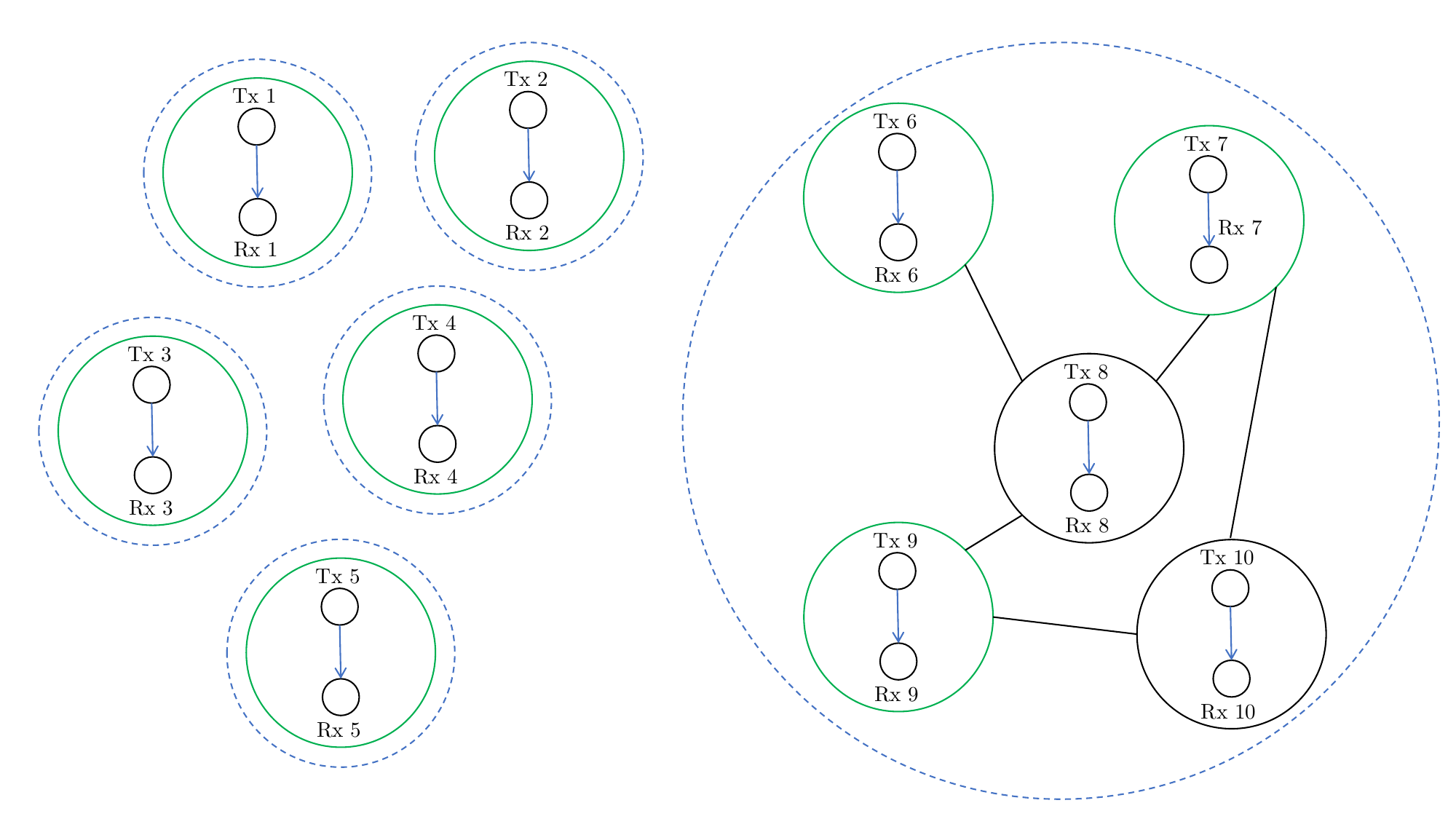}
		\caption{Network interference dynamics for a constructed instance of the SLqP problem with $K=10$ and $K_q=5$. Nodes (i.e., users) that are connected mutually interfere. The users indicated in green transmit at the maximum available power to achieve the highest SLqP ($q=50$) rate.}
		\centering
		\label{NP_hardness_graph}
	\end{center}
\end{figure} 

For achievability, we note that if the subgraph induced by the component $\mathcal{C}_{K-K_{q}+1}$ has a maximum independent set of size $\left|I_{\mathcal{C}_{K-K_{q}+1}}\right|$, then we can set
\[
p_{k}=\begin{cases}
1 & \mathrm{if}\hspace{0.3em}{k=1,\ldots,K-K_{q}}\mathrm{\hspace{0.3em}or\hspace{0.3em}}{v_{k}\in I_{\mathcal{C}_{K-K_{q}+1}}}\hspace{0.3em}\\
0 & \mathrm{otherwise}
\end{cases}
\]
yielding an objective value of $\left|I_{\mathcal{C}_{K-K_{q}+1}}\right|\mathrm{log}\left(1+\frac{1}{L}\right)$.
\begin{figure*}[!h]
	\footnotesize
	\begin{subequations}
		\begin{align}
		\underset{\mathbf{p}^{}}{\mathrm{max}}\;f_{K_{q}}\left(\mathrm{log}\left(1+\frac{A_{1}\left(\mathbf{p}^{}\right)}{B_{1}\left(\mathbf{p}^{}\right)}\right),\ldots,\mathrm{log}\left(1+\frac{A_{K}\left(\mathbf{p}^{}\right)}{B_{K}\left(\mathbf{p}^{}\right)}\right)\right)\hspace{35.3em}\label{quadratic_equiv_proof_line1}\\
		=\underset{\mathbf{p}^{}}{\mathrm{max}}\;f_{K_{q}}\left(\mathrm{log}\left(\underset{x_{1}^{}}{\mathrm{max}}\;\left\{ 1+2x_{1}^{}\sqrt{A_{1}\left(\mathbf{p}^{}\right)}-\left(x_{1}^{}\right)^{2}B_{1}\left(\mathbf{p}^{}\right)\right\} \right),\ldots,\mathrm{log}\left(\underset{x_{K}^{}}{\mathrm{max}}\;\left\{ 1+2x_{K}^{}\sqrt{A_{K}\left(\mathbf{p}\right)}-\left(x_{K}^{}\right)^{2}B_{K}\left(\mathbf{p}\right)\right\} \right)\right)\hspace{0.5500em}    	\label{quadratic_equiv_proof_line2}\\
		=\underset{\mathbf{p}^{}}{\mathrm{max}}\;f_{K_{q}}\left(\underset{x_{1}^{}}{\mathrm{max}}\;\left\{ \mathrm{log}\left(1+2x_{1}^{}\sqrt{A_{1}\left(\mathbf{p}^{}\right)}-\left(x_{1}^{}\right)^{2}B_{1}\left(\mathbf{p}^{}\right)\right)\right\} ,\ldots,\underset{x_{K}^{}}{\mathrm{max}}\;\left\{ \mathrm{log}\left(1+2x_{K}^{}\sqrt{A_{K}\left(\mathbf{p}\right)}-\left(x_{K}^{}\right)^{2}B_{K}\left(\mathbf{p}\right)\right)\right\} \right)\hspace{1.4em}\label{quadratic_equiv_proof_line3}\\
		=\underset{\mathbf{x}^{},\mathbf{p}^{}}{\mathrm{max}}\;f_{K_{q}}\left(\mathrm{log}\left(1+2x_{1}^{}\sqrt{A_{1}\left(\mathbf{p}^{}\right)}-\left(x_{1}^{}\right)^{2}B_{1}\left(\mathbf{p}^{}\right)\right),\ldots,\mathrm{log}\left(1+2x_{K}^{}\sqrt{A_{K}\left(\mathbf{p}\right)}-\left(x_{K}^{}\right)^{2}B_{K}\left(\mathbf{p}\right)\right)\right)\hspace{9.35em}\label{quadratic_equiv_proof_line5}
		\end{align}
	\end{subequations}
	\hrulefill
\end{figure*}

For the converse, suppose that we are given an optimal value $f_\mathrm{opt}$. Observe that since the vertices $v_1,v_2,\ldots,v_{K-K_q}$ are isolated they neither experience nor create interference; therefore, the SLqP objective is non-decreasing in $p_1,p_2,\ldots,p_{K-K_q}$ and these users achieve the $K-K_q$ highest rates in the network. Thus, we can assume that $p_k=1$ for $k=1,\ldots,K-K_{q}$ without loss of generality.

It follows that the SLqP rate on the given graph is therefore equivalent to the sum-rate achieved on the subgraph induced by $\mathcal{C}_{K-K_{q}+1}$. The rest of the derivation follows from the proof of Theorem 1 in \cite{zhi-quan_luo_dynamic_2008}. Denoting the sum-rate achieved on component $\mathcal{C}_{K-K_{q}+1}$ as $f_{K_{q}}^{\mathcal{C}_{K-K_{q}+1}}$, we can compute its second derivative with respect to $p_k$ as:
\begin{multline}\label{Hessian_graph_component}
	\frac{\partial {^2}f_{K_{q}}^{\mathcal{C}_{K-K_{q}+1}}}{\partial p_{k}^{2}}	=-\frac{1}{\left(p_{k}+L+LK_q^{2}\sum_{\left(v_{k},v_{j}\right)\in \mathcal{C}_{K-K_{q}+1}}p_{j}\right)^{2}}\\
	+	\sum_{\left(v_{k},v_{j}\right)\in \mathcal{C}_{K-K_{q}+1}}\frac{1}{\left(K_q^{-2}+\sum_{\left(v_{k},v_{j}\right)\in \mathcal{C}_{K-K_{q}+1}}p_{j}\right)^{2}}
\end{multline}
Since each component of the graph is connected (by definition), the summation in the second term of (\ref{Hessian_graph_component}) is non-zero. Further, since $L>K_q$, the diagonal terms of the Hessian of the sum-rate on the subgraph induced by $\mathcal{C}_{K-K_{q}+1}$ (as given in (\ref{Hessian_graph_component})) are non-negative. Thus, the objective is convex with respect to each individual power variable (although not jointly convex). The maximum of a convex function over a polyhedron is attained at the extreme points of the set. It follows that the optimal power vector $\mathbf{p^*}$ must be a binary vector. Now define
\[
P_{\mathcal{C}_{K-K_{q}+1}}^{*}\coloneqq\left\{ v_{k}\in \mathcal{C}_{K-K_{q}+1}\left|p_{k}=1\right.\right\} 
\]  
Let $\left|I_{\mathcal{C}_{K-K_{q}+1}}\right|$ be the maximum independent set in $\mathcal{C}_{K-K_{q}+1}$. Using the fact that the first $K-K_q$ users attain the maximum rates in the network, we can write the SLqP rate objective equivalently as the sum-rate achieved over $\mathcal{C}_{K-K_{q}+1}$. 

Define $\mathcal{E}_{K-K_{q}+1}$ as the set of connected vertices in $\mathcal{C}_{K-K_{q}+1}$, i.e.,
\begin{equation}
\mathcal{E}_{K-K_{q}+1}\coloneqq\left\{ v_{j}\in\mathcal{C}_{K-K_{q}+1}\left|\left(v_{j},v_{k}\right)\in E\right|\right\} 
\end{equation}

It follows that the sum-rate (and hence optimal SLqP rate objective) over $\mathcal{C}_{K-K_{q}+1}$ can be bounded according to the procedure given in \cite{zhi-quan_luo_dynamic_2008} as:

\begin{subequations}\small\small\small
	\begin{align}
f^\mathrm{opt}=\sum_{v_{k}\in\mathcal{C}_{K-K_{q}+1}}\hspace{-1.30em}\mathrm{log}\left(1+\frac{1}{L+LK_{q}^{2}\left|\left\{ \left(v_{j},v_{k}\right)\in\mathcal{E}_{K-K_{q}+1}\right\} \right|}\right)\label{bound_line1}\hspace{5.30em}\\
<\left|I_{\mathcal{C}_{K-K_{q}+1}}\right|\mathrm{log}\left(1+\frac{1}{L}\right)\hspace{18.500em}
	\end{align}
\end{subequations}

We observe that the number of nodes in the subgraph induced by $\mathcal{C}_{K-K_{q}+1}$ is exactly $K_q$, which grows linearly with the number of nodes $K$ for a fixed value of $q\geq\frac{200}{K}$. Thus, it follows that the SLqP rate is strongly NP-hard since the maximum independent set problem is strongly NP-hard.

Note that the given proof does not hold for the max-min-rate problem; for $K_q=1$ all vertices are isolated and the maximum independent set is $V$ itself. This result is to be expected, since as mentioned earlier, the max-min-rate power allocation problem can be solved to optimality in polynomial time \cite{zhi-quan_luo_dynamic_2008}.
$\blacksquare$
\vspace{-0.700em}
\section{Proof of Theorem \ref{quadratic_SPR_equivalence}}\label{h_quad_proof}

Here we present the proof for the equivalence of the interference-limited SLqP rate maximization problem in (\ref{SPR_problem_shortterm}) with the problem in (\ref{SPR_problem_quad}). The equivalence can be seen by considering the chain of reasoning in (\ref{quadratic_equiv_proof_line1})--(\ref{quadratic_equiv_proof_line5}). Specifically, (\ref{quadratic_equiv_proof_line2}) follows from the result in Theorem (\ref{quadratic_transform}); (\ref{quadratic_equiv_proof_line3}) follows since $\mathrm{log}\left(\cdot\right)$ is a non-decreasing function, so we can move the $\mathrm{max}\left(\cdot\right)$ operator outside it; (\ref{quadratic_equiv_proof_line5}) follows since $f_{K_{q}}\left(\cdot\right)$ is non-decreasing in each component, so we can move the $\mathrm{max}\left(\cdot\right)$ operators corresponding to each component outside it.
\hfill$\blacksquare$
\vspace{-1.00em}
\section{Proof of Theorem \ref{QFT_convergence}}\label{MM_proof}
We begin by reviewing the MM framework as described in \cite{sun_majorization-minimization_2017}. Consider the optimization problem 
\begin{subequations}\label{MM_orig_problem}
	\begin{align}
	\underset{\mathbf{p^{\mathit{}}}}{\mathrm{maximize}}\quad f\left(\mathbf{p}\right)\hspace{5.43em}\\
	\mathrm{subject\,to}\quad\mathbf{p}\in\mathcal{P}\hspace{4.9em}
	\end{align}
\end{subequations}
where $\ensuremath{\ensuremath{f\left(\mathbf{p}\right)}:\mathbb{R}^{n}\mapsto\mathbb{R}}$ is continuous (but not necessarily either smooth or convex), and $\mathcal{P}\subseteq\mathbb{R}^{n}$ is non-empty, closed and convex.

The MM procedure as applied to this problem consists of two alternating steps. At the $i^\mathrm{th}$ iteration, a feasible point $\mathbf{p}\left[i\right]\in\mathcal{P}$ (initialized with some choice of $\mathbf{p}\left[1\right]\in\mathcal{{P}}$) is used to construct a minorizing function $g\left(\mathbf{p}\left|\mathbf{p}\left[i\right]\right.\right)$ such that 
\[
g\left(\mathbf{p}\left|\mathbf{p}\left[i\right]\right.\right)+c\left[i\right]\leq\ensuremath{f\left(\mathbf{p}\right)}\;\forall\mathbf{p}\in\mathcal{P}
\]
where 
\begin{equation}\label{minorizer_gap}
c\left[i\right]=\ensuremath{f\left(\mathbf{p}\left[i\right]\right)}-g\left(\mathbf{p}\left[i\right]\left|\mathbf{p}\left[i\right]\right.\right)
\end{equation}

Next, the minorizing function is maximized to obtain the next feasible iteration point $\mathbf{p}\left[i+1\right]$ as follows:
\[
\mathbf{p}\left[i+1\right]\in\mathrm{arg}\,\underset{\mathbf{p}\in\mathcal{P}}{\mathrm{max}}\quad g\left(\mathbf{p}\left|\mathbf{p}\left[i\right]\right.\right)
\]

Furthermore, define the set of directional stationary points as 
\[
\mathcal{P}^{*}=\left\{ \mathbf{p}\left|f'\left(\mathbf{p}\left[i\right],\mathbf{d}\right)\leq0\quad\forall\left(\mathbf{p}+\mathbf{d}\right)\in\mathcal{P}\right.\right\} 
\]
where $f'\left(\mathbf{p},\mathbf{d}\right)$ denotes the directional derivative of $f$ in the direction $\mathbf{p}$ and is given by
\[
f'\left(\mathbf{p}\left[i\right],\mathbf{d}\right)=\mathrm{lim}\,\underset{t\downarrow0}{\mathrm{inf}}\;\frac{f\left(\mathbf{p}\left[i\right]+t\mathbf{d}\right)-f\left(\mathbf{p}\left[i\right]\right)}{t}
\]
When $f$ is non-smooth, the MM procedure is guaranteed to converge to a point within $\mathcal{P}^{*}$ if the following conditions are satisfied \cite{sun_majorization-minimization_2017}:
\begin{enumerate}\label{convergence_conditions}
	\item The superlevel set $\mathcal{S}_{\geq f\left(\mathbf{p}_{0}\right)}=\left\{ \mathbf{p}\in\mathcal{P}\left|f\left(\mathbf{p}\right)\geq f\left(\mathbf{p}_{0}\right)\right.\right\} $ is compact given that $f\left(\mathbf{p}_{0}\right)>-\infty$.
	\item The directional derivatives of $f$ and $g$ are identical for all directions $\mathbf{d}$ over $\mathcal{{P}}$, i.e.,
	\[
	f'\left(\mathbf{p}\left[i\right],\mathbf{d}\right)=g'\left(\mathbf{p}\left[i\right];\mathbf{d}\left|\mathbf{p}\left[i\right]\right.\right)\quad
	\]
	\item{$g\left(\mathbf{p}\left|\mathbf{p}\left[i\right]\right.\right)$ is continuous in both $\mathbf{p}$ and $\mathbf{p}\left[i\right]$.}
\end{enumerate}
We now demonstrate that the proposed QFT and LFT algorithms are MM algorithms and satisfy these convergence conditions.

Consider the auxiliary objective for the QFT algorithm in (\ref{SPR_obj_quad}); we henceforth denote it by $f_{K_{q}}^{\mathrm{QFT}}\left(\mathbf{p},\mathbf{x}\right)$. We note that the optimal auxiliary variables $\mathbf{x}$ obtained from (\ref{optimal_quadratic_variables}) are a function of $\mathbf{p}$; hence, we denote them subsequently as $\mathbf{x}_\mathrm{opt}\left(\mathbf{p}\right)$. 

Then it follows that for all $\mathbf{p}\left[i\right]\in{\mathcal{P}}$, we must have
\begin{equation}\label{minorizing_inequalities}
f_{K_{q}}^{\mathrm{QFT}}\left(\mathbf{p},\mathbf{x}_\mathrm{opt}\left(\mathbf{p}\left[i\right]\right)\right)\leq f_{K_{q}}^{\mathrm{QFT}}\left(\mathbf{p},\mathbf{x}_\mathrm{opt}\left(\mathbf{p}\right)\right)=f_{Kq}\left(\mathbf{p}\right)
\end{equation}

The inequality follows since the auxiliary objective $f_{K_{q}}^{\mathrm{QFT}}\left(\mathbf{p},\mathbf{x}\right)$ is maximized when the auxiliary variables are updated according to (\ref{optimal_quadratic_variables}) for the given fixed value of $\mathbf{p}$ rather than for $\mathbf{p}\left[i\right]$. The equality holds since the update of the auxiliary variable defined in (\ref{optimal_quadratic_variables}) preserves equality with the original objective function. Thus, it follows that the auxiliary objective $f_{K_{q}}^{\mathrm{QFT}}\left(\mathbf{p},\mathbf{x}\right)$ minorizes $f_{K_{q}}\left(\mathbf{p}\right)$.

Having established that the QFT algorithm is indeed an MM algorithm, we now proceed to verify the conditions in (\ref{convergence_conditions}) to guarantee convergence to a directional stationary point.

\begin{enumerate}
	\item For Problem (\ref{SPR_problem_shortterm}), the set $\mathcal{P}=\left[0,P_{\mathrm{max}}\right]^{K}$ is non-empty, closed, and convex. Next, observe that the SLqP objective in (\ref{SPR_obj_shortterm}) is trivially upper bounded by the interference-free SLqP rate which, in turn, is upper bounded by the interference-free sum-rate from Property \ref{ordering}. Thus, denoting the optimal value of Problem (\ref{SPR_problem_shortterm}) as $f_{K_{q},\mathrm{opt}}$, we must have
	\[
	f_{K_{q},\mathrm{opt}}\leq K{\underset{k=1,\ldots,K}{\mathrm{max}}\mathrm{log}}\left(1+\frac{P_{\mathrm{max}}\left|h_{k\rightarrow k}\right|^{2}}{\sigma^{2}}\right)
	\]
	Taken together with the compactness of $\mathcal{P}$, the continuity and boundedness of the objective in (\ref{SPR_obj_shortterm}) imply that the superlevel sets of Problem (\ref{SPR_problem_shortterm}) are compact.
	\item We observe that (\ref{minorizing_inequalities}) implies that $c\left[i\right]=0$ in (\ref{minorizer_gap}); hence, the auxiliary objective is a so-called \textit{tangent minorant} \cite{jacobson2007expanded}. In particular, this implies that the directional derivative of $f_{K_{q}}^{\mathrm{QFT}}\left(\mathbf{p},\mathbf{x}\right)$ equals that of $f_{K_{q}}\left(\mathbf{p}\right)$ at $\mathbf{p}=\mathbf{p}\left[i\right]$ \cite{jacobson2007expanded}.
	\item We note that $\hat{r}_{k}$ in (\ref{auxiliary_rate_QFT}) is a continuous function of both $\mathbf{p}$ and $\mathbf{x}$ for all $k$. Since the SLqP utility function is continuous and the composition of two continuous functions is also continuous, it follows that the auxiliary QFT objective is also continuous. 
\end{enumerate}

Taken together, these conditions imply the convergence of the QFT algorithm to a directional stationary point of Problem (\ref{SPR_problem_shortterm}). An identical series of arguments can be utilized for the LFT algorithm. $\blacksquare$
\end{document}